\newcounter{restate}
\newtheorem{thm}{Theorem}
\newtheorem{theorem}{Theorem}
\newtheorem{proposition}[thm]{Proposition}
\newtheorem{problem}{Problem}
\newtheorem{corollary}[thm]{Corollary}
\newtheorem{lemma}[thm]{Lemma}
\newtheorem{definition}{Definition}
\newtheorem{remark}{Remark}
\newtheorem{fact}{Fact}
\newtheorem{ass}{Assumption}
\newcommand{\cC}{\mathcal{C}}
\newcommand{\cD}{\mathcal{D}}
\newcommand{\cH}{\mathcal{H}}
\newcommand{\cS}{\mathcal{S}}
\newcommand{\cv}{{\mathbf{c}}}
\newcommand{\ev}{{\mathbf{e}}}
\newcommand{\mv}{{\mathbf{m}}}
\newcommand{\sv}{{\mathbf{s}}}
\newcommand{\uv}{{\mathbf{u}}}
\newcommand{\xv}{{\mathbf{x}}}
\newcommand{\yv}{{\mathbf{y}}}
\newcommand{\Am}{{\mathbf{A}}}
\newcommand{\Em}{{\mathbf{E}}}
\newcommand{\Gm}{{\mathbf{G}}}
\newcommand{\Hm}{{\mathbf{H}}}
\newcommand{\Mm}{{\mathbf{M}}}
\newcommand{\Pm}{{\mathbf{P}}}
\newcommand{\Qm}{{\mathbf{Q}}}
\newcommand{\Rm}{{\mathbf{R}}}
\newcommand{\Sm}{{\mathbf{S}}}
\newcommand{\Xm}{{\mathbf{X}}}
\newcommand{\Xv}{{\mathbf{X}}}
\newcommand{\Yv}{{\mathbf{Y}}}
\newcommand{\mat}[1]{\ensuremath{\boldsymbol{#1}}}
\newcommand{\un}{{\mat{1}}}
\newcommand{\Cc}{{\mathcal C}}
\newcommand{\Hc}{{\mathcal H}}
\newcommand{\Cpub}{\Cc_{\textup{pub}}}
\newcommand{\Cpubp}{\Cc_{\textup{pub}}^{\textup{proj}}}
\newcommand{\Cmat}{\cC^{\textup{Mat}}}
\newcommand{\Clrpc}{\Cc_{\textup{LRPC}}}
\newcommand{\clrpc}{\cv_{\textup{LRPC}}}
\newcommand{\Hpub}{{\Hm_{\textup{pub}}}}
\newcommand{\Hpubp}{{\Hm_{\textup{pub}}^{\textup{proj}}}}
\newcommand{\Hsec}{{\Hm_{\textup{sec}}}}
\newcommand{\wrGV}{w_{\textup{rVG}}}
\newcommand{\wrVG}{w_{\textup{rVG}}}
\newcommand{\ws}{w_{\textup{rS}}}
\newcommand{\F}{\mathbb{F}}
\newcommand{\Fq}{\F_q}
\newcommand{\Fqm}{\F_{q^m}}
\newcommand{\fqm}{\F_{q^m}}
\newcommand{\vsg}[1]{\langle #1 \rangle_{\Fq}}
\newcommand{\vsgm}[1]{\langle #1 \rangle_{\Fqm}}
\newcommand{\transpose}[1]{{#1}^{ {\intercal} }}
\newcommand{\transp}[1]{{#1}^{ {\intercal} }}
\newcommand{\defeq}{\eqdef}
\newcommand{\sgn}{\textup{sgn}}
\newcommand{\dec}{\textup{dec}}
\newcommand{\eqdef}{\mathop{=}\limits^{\triangle}}
\DeclareMathOperator*{\Sp}{Supp}
\DeclareMathOperator*{\Sup}{Supp}
\newcommand{\Mat}[1]{\textup{\textbf{Mat}}(#1)}
\newcommand{\Matp}[1]{\textup{\textbf{Mat}}^{\textup{proj}}(#1)}
\newcommand{\IInt}[2]{\llbracket #1, #2 \rrbracket}
\begin{document}
	\title{Two attacks on rank metric code-based schemes: RankSign and an Identity-Based-Encryption
	scheme}

\author{Thomas Debris-Alazard  \footnote{Sorbonne Universit\'{e}s, UPMC Univ Paris 06, France} \footnote{ Inria, SECRET Project, 2 Rue Simone Iff 75012 Paris Cedex} \and Jean-Pierre Tillich $^{\dagger}$ }

	\maketitle
	\begin{abstract}
RankSign \cite{GRSZ14a} is a code-based signature scheme proposed to the NIST competition for quantum-safe cryptography
\cite{AGHRZ17} and, moreover, is a fundamental building block of a new Identity-Based-Encryption (IBE) \cite{GHPT17a}. This signature scheme is based on the rank metric and enjoys remarkably small key sizes, about 10KBytes for an intended level of security of 128 bits. 
Unfortunately we will show that all the parameters proposed for this scheme in \cite{AGHRZ17} can be broken by an algebraic attack that exploits the fact that the augmented LRPC codes
used in this scheme have very low weight codewords. 
Therefore, without RankSign the IBE cannot be instantiated at this time. As a second contribution we will show that the problem is deeper than finding a new signature in rank-based cryptography, 
we also found an attack on the generic problem upon which its security reduction relies. However, contrarily to the RankSign scheme, it seems that the parameters of the
IBE scheme could be chosen in order to avoid our attack. Finally, we have also shown that if one replaces the rank metric in the 
\cite{GHPT17a} IBE scheme by the Hamming metric, then a devastating attack can be found.
	\end{abstract}

	\section{Introduction}
	
	\subsection{An efficient code-based signature scheme: RankSign and a code-based Identity-Based-Encryption scheme}
	
	{\bf Code-based signature schemes.}
	It is a long standing open problem to build an efficient and secure signature scheme based on the
	hardness of decoding a linear code which could compete in all respects
	with DSA or RSA. Such schemes could indeed give a quantum resistant 
	signature for replacing in
	practice the aforementioned signature schemes  that are well known to be broken by  
	quantum computers. A first partial answer to this question was given in
	\cite{CFS01}. It consisted in adapting the Niederreiter scheme \cite{N86} for
	this purpose. This requires a linear code for which there exists an efficient  decoding algorithm for 
	a non-negligible set of inputs.
	This means that if $\Hm$ is an
	$r \times n$ parity-check matrix of the code,
	there exists for a non-negligible set of elements $\sv$ in  $\{0,1\}^r$ an efficient way to
	find a word $\ev$ in $\{0,1\}^n$ of smallest Hamming weight such that $\Hm
	\transp{\ev}=\transp{\sv}$. 
	
	The authors of \cite{CFS01} noticed that very high rate Goppa codes are
	able to fulfill this task, and their scheme can indeed be
	considered
	as the first step towards a solution of the aforementioned problem.
	However, the poor scaling of the key size  when security has to be increased prevents 
	this scheme to be a completely satisfying answer to this issue.
	
	\noindent {\bf The rank metric.}
	There has been some exciting progress in this area for another metric, namely the rank metric \cite{GRSZ14a}.
	A  code-based signature scheme whose security relies 
	on decoding codes with respect to the rank metric has been proposed there.
	It is called RankSign.
	Strictly speaking, the rank metric consists in viewing an element in $\Fq^N$ (when $N$ 
	is a product $N= m \times n$) as an $m \times n$ matrix over $\Fq$ and the rank distance between 
	two elements $\xv$ and $\yv$ is defined as the rank of the matrix $\xv - \yv$. This depends of course on
	how $N$ is viewed as a product of two elements. Decoding in this metric is known to be an NP hard problem 
	\cite{BFS99,C01}.
	In the particular case of \cite{GRSZ14a}, the codes which are considered are not $\Fq$-linear but, as is customary
	in the setting of rank metric based cryptography, $\Fqm$-linear: the codes are here subspaces of $\Fqm^n$.
	Here the
	elements $\xv=(x_1,\dots,x_n)$ of $\Fqm^n$ are viewed as $m \times n$ matrices by expressing 
	each coordinate $x_i$ in a certain fixed $\Fq$-basis of $\Fqm$. This yields a column vector $\xv^i$ in $\Fq^m$ and 
	the concatenation of these column vectors yields an $m \times n$ matrix $\Mat{\xv}=\begin{pmatrix}
	\xv^1& \hdots &\xv^n\end{pmatrix}$ that allows to put a rank metric over $\Fqm^n$.
	This allows to reduce the key size by a factor of $m$ when compared to the 
	$\Fq$-linear setting (for more details see the paragraph at the end of Section \ref{sec:notation}).

	Decoding such codes for the rank metric is not known to be NP-hard anymore. There is 
	however a randomized reduction of this problem to decode an  $\Fq$-linear code for the Hamming metric \cite{GZ14} when the 
	degree $m$ of the extension field is sufficiently big. This situation is in some sense reminiscent to the current thread in 
	cryptography based on codes or on lattices where structured codes (for instance quasi-cyclic codes) or structured 
	lattices (corresponding to an additional ring structure) are taken. However the $\Fqm$-linear case has an advantage over the other 
	structured proposals, in the sense that it has a randomized reduction to an NP complete problem. 
	This is not the case for the other structured proposals. 
	Relying on $\Fqm$-linear codes is one of the main reason why RankSign enjoys noticeably small public key sizes: 
	it is about 10KBytes for 128 bits of security for the parameters proposed in the NIST submission\cite{AGHRZ17}. Furthermore, RankSign comes with a security proof showing that there is no leakage coming from signing many times.
	It also proved to be a fundamental building block in the Identity-Based-Encryption (IBE) scheme based on the rank metric suggested in \cite{GHPT17a}. 
	
	\noindent
	{\bf A new IBE scheme based on codes.}
	The concept of IBE was introduced by Shamir in 1984 \cite{S84}. It gives an alternative to the standard notion of public-key encryption. In an IBE scheme, the public key associated with a user can be an arbitrary identity string, such as his e-mail address, and others can send encrypted messages to a user using his identity without having to rely on a public-key infrastructure, given short public parameters. The main technical difference between a Public Key Encryption (PKE) and  IBE is the way  the public and private keys are bound and the way of verifying those keys. In a PKE scheme, verification is achieved through the use of a certificate which relies on a public-key infrastructure. In an IBE, there is no need of verification of the public key but the private key is managed by a Trusted Authority (TA).
	
	There are two issues that makes the design of IBE extremely hard: the requirement that public keys 
	are arbitrary strings and the ability to extract decryption keys from the public keys. 
	In fact, it took nearly twenty years for
	the problem of designing an efficient method to implement an IBE to be solved. 
	The known methods of designing IBE are based on different tools: from elliptic curve
	pairings \cite{SOK00} and  \cite{BF01}; from the quadratic residue problem \cite{C01}; from the Learning-With-Error (LWE) problem  \cite{GPV08}; from the computational Diffie-Hellman assumption \cite{DG17} and finally from the Rank Support Learning (RSL) problem \cite{GHPT17a}. The last scheme based on codes is an adaptation of the \cite{GPV08} technique, but instead of relying on the Hamming metric it relies on the rank metric. It has to be noted that there has been some recent and exciting progress in the design of IBE. In \cite{DG17a} it has been shown how to generalize the work of \cite{DG17} by introducing a new primitive, One-Time Signatures with Encryption (OTSE), that enables to construct fully secure IBE schemes. Furthermore it was shown in \cite{DGHM18} how to instantiate OTSE primitives from LWE and the Low Parity Noise problems (LPN). This gave after the IBE's \cite{GPV08} and \cite{GHPT17a} the third scheme which may hope to resist to a quantum computer.

	\subsection{Our contribution}
	
	{\bf An efficient attack on RankSign.} Our first contribution is that despite the fact that the security of  RankSign 
	might very well be founded on a hard problem (namely distinguishing an augmented LRPC code from a random
	linear code), we show here that all the parameters proposed for RankSign in \cite{AGHRZ17} can be broken by a suitable algebraic attack. The problem is actually deeper than that, because  the attack is actually polynomial in nature and can not really be thwarted by changing the parameters. The 
	attack builds upon the following observations
	\begin{itemize}
		\item The RankSign scheme is based on augmented LRPC codes;
		\item To have an efficient signature scheme, the parameters of the augmented LRPC codes
		have to be chosen very carefully;
		\item For the whole range of admissible parameters,  it turns out rather unexpectedly that these augmented LRPC codes have very low-weight codewords. This can be proved by subspace product considerations;
		\item These low-weight codewords can be recovered by algebraic techniques and reveal enough of the secret trapdoor
		used in the scheme to be able to sign like a legitimate user.
	\end{itemize}
	
	This attack has also a significant impact on the IBE proposal \cite{GHPT17a} whose security is based on the security of RankSign. Right now, there is no backup solution for instantiating this IBE scheme, since RankSign was the only rank-metric code based signature scheme following the hash
	and sign paradigm that is needed in the IBE scheme. 
	
	\medskip
	\noindent
	{\bf An efficient attack on the IBE \cite{GHPT17a}.} Our second contribution is to show that the problem is deeper than finding a new hash and sign signature scheme in rank-based cryptography to instantiate the IBE proposed in \cite{GHPT17a}. Actually the security of this IBE scheme does not solely rely on the rank metric code-based signature scheme and the rank syndrome decoding, it also relies on the Rank Support Learning (RSL) problem.
	We show here that the RSL problem is much easier for the parameters proposed in the IBE scheme \cite{GHPT17a} and can be broken by a suitable algebraic attack. 
	Interestingly enough, the approach for breaking the RSL problem is similar to what we did for RankSign:
	\begin{itemize}
		\item we exhibit a matrix code that can be deduced from the public data that contains many low-weight codewords and whose support reveals 
		the secret support of the RSL problem;
		\item we find such low weight codewords efficiently by solving a largely overdetermined bilinear system. 
	\end{itemize}
	However in this case, contrarily to the RankSign scheme, even if the set of parameters that could defeat our attack is small, it is non empty and our attack could 
	be thwarted by choosing the parameters appropriately and if an appropriate signature scheme were found.
	
	We have also explored whether it is possible to change in the IBE scheme of \cite{GHPT17a} the rank metric by the Hamming metric. It turns out that the problem 
	is much worse for the Hamming case. Indeed by adapting the IBE \cite{GHPT17a} to the Hamming metric, based on the remark that signatures must have a small weight, we show that even the simplest generic attack, namely the Prange algorithm \cite{P62}, breaks the IBE in the Hamming setting in polynomial time, and this irrespective of the way the parameters are chosen.

	\section{Generalities on rank metric and $\Fqm$-linear codes}
	\label{sec:notation}

	\subsection{Definitions and notation} 
	
	We provide here notation and definitions that are used throughout the paper.
	\newline
	
	\noindent{\bf Big O notation.} We will use the family of Bachmann-Landau notations, $f(n)=o(g(n))$, $f(n)=O(g(n))$, $f(n)=\Omega(g(n))$, $f(n)=\Theta(g(n))$, 
	$f(n)=\omega(g(n))$ meaning respectively that $\lim_{n \rightarrow \infty} \frac{f(n)}{g(n)}=0$, 
	$\limsup_{n \rightarrow \infty} \frac{|f(n)|}{g(n)} < \infty$, $\liminf_{n \rightarrow \infty} \frac{f(n)}{g(n)} >0$, 
	$f(n)=O(g(n))$ and $f(n)=\Omega(g(n))$, $\lim_{n \rightarrow \infty} \frac{|f(n)|}{|g(n)|}=\infty$.
	
	\noindent {\bf Vector notation.} Vectors will be written  using bold lower-case letters, e.g. $\xv$. The ith component of $\xv$ is denoted by $x_{i}$. Vectors are in row notation. Matrices will be written as bold capital letters, e.g. $\Xm$, and the $i$-th column of a matrix $\Xm$ is denoted $\Xm_{i}$.
	The rank of a matrix $\Xv$ will be simply denoted by $|\Xv|$.
	\newline

	\noindent {\bf Field notation.} Let 
	$q$ be a power of a prime number. We will denote by $\Fq$ the finite field of cardinality
	$q$. 
	\newline
	
	\noindent
	{\bf Coding theory notation.} A linear code $\Cc$ over a finite field $\Fq$ of length $n$ and dimension $k$ is a subspace of the vector space $\Fq^n$ of dimension $k$. We say that it has parameters $[n, k]$ or that it is an 
	$[n, k]$-code. A generator matrix $\Gm$ for it is a full rank $k \times n$ matrix over $\Fq$ which is such that
	$$\Cc =\{\uv\Gm:\uv \in \Fq^k\}.$$
	In other words, the rows of $\Gm$ form a basis of $\Cc$. A parity-check matrix $\Hm$ for it is a full-rank
	$(n-k)\times n$ matrix over $\Fq$ such that
	$$\Cc =\{\cv \in \Fq^n: \Hm \transp{\cv} =0\}.$$
	In other words, $\Cc$ is the null space of $\Hm$.

	Rank metric codes basically consist in viewing codewords as matrices. More precisely, when $N$ is the product of two numbers $m$ and $n$, $N=mn$ we will equip the vector space $\Fq^N$ with the rank metric by viewing its elements as
	matrices over $\Fq^{m \times n}$, i.e.
	$$
	d(\Xv,\Yv) = |\Xv - \Yv|.
	$$
	An $[m\times n,K]$ {\em matrix code}  of dimension $K$ over $\Fq^{m \times n}$ is a subspace of $\Fq^{m \times n}$ of dimension $K$. Such a code is equipped in a natural 
	way
	with the rank metric. There is a particular subclass of matrix codes that has the nice property to be specified much more compactly than a generic 
	matrix code. It  consists in taking 
	a linear code over an extension field $\Fqm$ of $\Fq$ of length $n$. Such a code can be viewed as a matrix code consisting of matrices 
	in $\Fq^{m \times n}$  by expressing each coordinate $c_i$ of a codeword $\cv=(c_i)_{1 \leq i \leq n}$ in a fixed $\Fq$ basis of $\Fqm$.
	When the $\Fqm$-linear code is of dimension $k$ the dimension of the matrix code viewed as an $\Fq$-subspace of $\Fq^{m \times n}$ 
	is $K=k.m$. More precisely we bring in the following definition. 
	
	\begin{definition}[Matrix code associated to an $\Fqm$ linear code] \label{def:matCode} 
		Let $\Cc$ be an $[n,k]$-linear code over $\Fqm$, that is a subspace of $\Fqm^n$ of dimension $k$ over $\Fqm$,
		and let $(\beta_1\dots \beta_m)$ be a basis of $\Fqm$ over $\Fq$. Each word $\cv \in \Cc$ can be represented by an $m\times n$ matrix $\Mat{\cv} = (M_{ij})_{\substack{1 \leq i \leq m\\ 1 \leq j \leq n}}$ over $\Fq$, with $c_j = \sum_{i=1}^m M_{ij} \beta_i$.
		The set $\{\Mat{\cv},\cv \in \Cc\}$ is  the $[m\times n,k.m]$ matrix code over $\Fq$ associated to the $\Fqm$ linear code $\Cc$. The (rank) weight of $\cv$ is defined as the rank of the associated matrix, that is $|\cv| \eqdef |\Mat{\cv}|$.
	\end{definition}

	This definition depends of course on the basis chosen for $\Fqm$. However changing the basis does not change 
	the distance between codewords. The point of defining matrix codes in this way is that they have a more compact 
	description. It is readily seen that an $[m\times n,k.m]$ matrix code over $\Fq$  can be specified from a systematic generator matrix 
	(i.e. a matrix of the form $\begin{bmatrix} \un_{k.m} | \Pm \end{bmatrix}$ with $\un_{k.m}$ being the identity matrix of size $k.m$) by  
	$k(n-k)m^2 \log_2 q$ bits
	whereas an  $\Fqm$-linear code uses only $k(n-k)\log_2 q^m=k(n-k)m \log_2 q$ bits. This is particularly interesting for cryptographic applications where this notion is directly related to the
	public key size. This is basically what explains why in general McEliece cryptosystems based on rank metric matrix codes have a smaller keysize than 
	McEliece cryptosystems based on the Hamming metric. All of these proposals (see for instance \cite{GPT91,GO01,G08,GMRZ13,GRSZ14a,ABDGHRTZ17a,AABBBDGZ17}) are actually built from matrix codes over $\Fq$ obtained 
	from 
	$\Fqm$-linear codes. In a sense, they can be viewed as structured matrix codes, much in the same way as quasi-cyclic linear codes can be viewed as structured versions of linear codes. In the latter case, the code is globally invariant by a linear isometric transform on the codewords corresponding to shifts of a certain length. In the $\Fqm$ 
	linear case the code is globally invariant by an isometric  linear transformation that corresponds to multiplication in $\Fqm$.

	\subsection{ Rank code-based cryptography} \label{subsec:rankBasedCrypto} 
	Rank-based cryptography relies on the hardness of decoding for the rank metric.
	This problem is the rank metric analogue of the well known decoding problem in the Hamming metric \cite{BMT78}.
	We give it here its syndrome formulation:

	\begin{problem}[Rank (Metric) Syndrome Decoding Problem]\label{def:RSD}$ $\\
		{\em \textup{\em Instance:}} A full-rank $(n-k) \times n$ matrix $\Hm$ over $\Fqm$ with $k \leq n$, a syndrome $\sv \in \Fqm^{n-k}$ and $w$ an integer.\\
		{\em 	\textup{\em Output:}} An error $\ev \in \Fqm^{n}$ such that $|\ev| = w$ and $\Hm\transp{\ev} = \transpose{\sv}$.

	\end{problem}

	\noindent This problem has recently been proven hard in \cite{GZ14} by a  probabilistic reduction to the decoding problem in the Hamming metric which is known to be 
	NP-complete \cite{BMT78}. This problem has typically a unique solution when $w$ is below the Varshamov-Gilbert distance $\wrGV(q,m,n,k)$ for the rank metric  which is 
	defined as
	\begin{definition}[Varshamov-Gilbert distance for the rank metric] \label{def:GV} 
		The Varshamov-Gilbert distance $\wrVG(q,m,n,k)$ for $\Fqm$ linear codes of dimension $k$ in the rank metric is defined as the smallest $t$ for which 
		$
		q^{m(n-k)} \leq B_t
		$
		where $B_t$ is the size of the ball of radius $t$ in the rank metric.
	\end{definition}
	
	\begin{remark}$ $

		\begin{enumerate}
			\item $q^{m(n-k)}$ can be viewed as the number of different syndromes $\sv \in \fqm^{n-k}$.
			\item $B_t = \sum_{i=0}^t S_i$ where $S_i$ is the size of a sphere of radius $i$ in the rank metric over $\F_q^{m \times n}$.
			This latter quantity is equal to
			$$
			S_i = \prod_{j=0}^{i-1} \frac{(q^n-q^j)(q^m-q^j)}{(q^i-q^j)} = \Theta\left(q^{i(m+n-i)} \right).
			$$
			\item 
			From this last asymptotic expression it is straightforward to check that (for more details see \cite{L14a}) 
			
			\begin{equation}\label{eq:VGdist} 
			\wrGV(q,m,n,k) = \frac{m+n - \sqrt{(m-n)^2+4 km}}{2}(1+o(1)),
			\end{equation} 
			when either $m$ or $n$ tends to infinity.
		\end{enumerate}
	\end{remark}

	\noindent The best algorithms for solving the decoding problem in the rank metric are exponential in $n^2$ as long as $m=\Theta(n)$, $w=\Theta(n)$ but $w$ stays below the Singleton bound which is defined by
	\begin{definition}[Singleton distance in the rank metric]
		The rank Singleton distance $\ws(q,m,n,k)$ for $\Fqm$ linear codes of dimension $k$  is defined as
		$
		\ws(q,m,n,k) \eqdef \left\lfloor \frac{(n-k)m}{\max(m,n)}\right\rfloor + 1
		$
	\end{definition}

	The usual notion of the support of a vector is generally relevant to decoding in the Hamming metric and corresponds for a vector $\xv=(x_i)_{1 \leq i \leq n}$ to the set of  positions
	$i$ in $\{1,\dots,n\}$ such that $x_i \neq 0$. Various decoding algorithms for the Hamming metric \cite{P62,LB88,S88,D91,FS09,BLP11,MMT11,BJMM12,MO15,DT17,BM17} use this notion 
	in a rather fundamental way. The definition of the support of a vector has to be changed a little bit to be relevant to the rank metric. This notion was first put forward in \cite{GRS13,GRS16} to obtain an analogue of the Prange decoder \cite{P62} for the rank metric.
	\begin{definition}[Support]
		Let $\xv = (x_{i})_{1 \leq i \leq n}$ be a vector of $\Fqm^{n}$, its support is defined as:
		$$
		\Sp(\xv) \eqdef \langle x_{1},\cdots,x_{n} \rangle_{\Fq}. 
		$$
	\end{definition} 
	This notion of support is among other things relied to the rank metric as it is easily verified that for any vector $\xv$ of $\Fqm^{n}$ we have:
	$$
	|\xv| = \dim(\Sp(\xv)) 
	$$

	\section{The RankSign scheme} 
	\label{sec:rkSgn}

	We recall in this section basic facts about RankSign \cite{GRSZ14a}. It is based on augmented LRPC codes. Roughly speaking it is a hash and sign signature scheme: the message $\mv$ that has to be signed is hashed by a hash function ${\Hc}$ and the signature  is equal to $f^{-1}(\Hc(\mv))$ where $f$ is a trapdoor one-way function. In this way the pair $(\mv,f^{-1}(\cH(\mv)))$ forms a valid signature. Recall now that code-based cryptography relies on Problem \ref{def:RSD} (rank syndrome decoding) which amounts to consider here the following one way-function to build a signature primitive: 	
	\begin{displaymath}
	\begin{array}{lccc}
	f_{\Hm} :  &S_{w} & \longrightarrow & \Fqm^{n-k}\\
	&\ev & \longmapsto & \ev\transpose{\Hm}
	\end{array}
	\end{displaymath}
	where $S_{w}$ denotes the words of $\Fqm^{n}$ of rank weight $w$, $\Hm$ a parity-check matrix of size $(n-k)\times n$. To introduce a trapdoor in $f_{\Hm}$ authors of \cite{GMRZ13} proposed to use  parity-check matrices of the family of augmented LRPC 
	codes. Indeed,   	
	when the underlying LRPC structure is known (roughly speaking, this is the trapdoor), there is a decoding algorithm based on the LRPC structure that computes for any (or for a good fraction)
	$\sv \in \Fqm^{n-k}$ an $\ev \in \Fqm^n$ of weight $w$ such that $\Hm \transp{\ev} = \transp{\sv}$.
	This decoding algorithm is probabilistic and the parameters of the code have to be chosen in a very specific fashion in order to have a probability of success very close to 1
	(see Fact \ref{fa:LRPCdecoder} at the end of this section).

	\noindent The following definition will be useful for our discussion. 
	\begin{definition}[Homogeneous Matrix]\label{def:homogeneous matrix}
		A matrix $\Hm=(H_{ij})_{\substack{1 \leq i \leq n-k \\ 1 \leq j \leq n}}$ over $\Fqm$ is homogeneous of weight $d$ if all its coefficients generate an $\Fq$-vector space of dimension $d$:
		\[ \dim  \left( \vsg{H_{ij}: 1 \leq i\leq n-k,\;1\leq j \leq n} \right) = d \]
	\end{definition}

	\noindent LRPC (Low Rank Parity Check) codes of weight $d$ and augmented LRPC codes of type $(d,t)$ are defined from homogeneous matrices of weight $d$ as 
	\begin{definition}[LRPC and augmented LRPC code]
		An \textup{LRPC} code over $\Fqm$ of weight $d$ is a code that admits a parity-check matrix $\Hm$ with entries in 
		$\Fqm$ that is homogeneous of 
		weight $d$ whereas an augmented \textup{LRPC} code of type $(d,t)$ over $\Fqm$ is a code that admits a parity-check matrix 
		$\Hm' = \begin{bmatrix} \Hm | \Rm \end{bmatrix} \Pm$ where $\Hm$ is a homogeneous matrix of rank $d$ over 
		$\Fqm$,  
		$\Rm$ is  a matrix with $t$ columns that has its entries in $\Fqm$ and $\Pm$ is a square and invertible matrix
		with entries in $\Fq$ that has the same number of columns as $\Hm'$. 
	\end{definition}
	
	\begin{remark} Note that any invertible
		$\Pm \in \Fq^{n \times n}$ is an isometry for the rank metric, since
		for any $\xv \in \Fqm^n$ we have $\Sup(\xv) = \Sup(\xv \Pm)$ and therefore
		$$
		|\xv| = |\xv \Pm|.
		$$ 
	\end{remark}

	\noindent The public key and the secret key for RankSign are given by:\\
	\newline 
	{\bf public key:} $\Hpub$ which is a random $(n-k)\times n$ parity-check matrix of an augmented LRPC code of type $(d,t)$. It is of the 
	form
	$$
	\Hpub = \Qm \Hm'
	$$
	with 
	$
	\Hm' = \begin{bmatrix} \Hm | \Rm \end{bmatrix} \Pm
	$
	where $\Qm$ is an invertible $(n-k)\times(n-k)$ matrix over $\Fqm$, 
	$\Hm$ is a homogeneous matrix of rank $d$ over 
	$\Fqm$,  
	$\Rm$ is  a matrix with $t$ columns that has its entries in $\Fqm$ and $\Pm$ is a square and invertible matrix
	with entries in $\Fq$ that has the same number of columns as $\Hm'$.\\
	\newline
	\noindent
	{\bf secret key:} The matrix $\Hsec \eqdef \begin{bmatrix} \Hm|\Rm \end{bmatrix}$.
	\newline

	\noindent From the knowledge of this last matrix a signature is computed by using a decoding algorithm devised for LRPC codes.
	Recall that LRPC codes can be viewed as analogues of LDPC codes for the rank metric. In particular, they enjoy an efficient decoding algorithm based on their low rank parity-check matrix. Roughly speaking, Algorithm 1 of \cite{GMRZ13} decodes up to $w$ errors when $dw \leq n-k$ in polynomial time (see \cite[Theorem 1]{GMRZ13}).
	It uses in a crucial way the notion of the linear span of a product of subspaces of $\Fqm$:
	\begin{definition} 
		Let $U$ and $V$ be two subspaces of $\Fqm$, then 
		$$
		U\cdot V \eqdef \langle uv : u \in U,\;v \in V\rangle_{\Fq}.
		$$	
	\end{definition}  
	\noindent Roughly speaking, Algorithm 1 of \cite{GMRZ13} works as follows when we have to recover an error $\ev$ of weight $w$ from the knowledge of its syndrome $\sv$ with respect to a 
	parity-check matrix $\Hm=(H_{ij})_{\substack{1 \leq i \leq n-k \\ 1 \leq j \leq n}}$ over $\Fqm$ that is homogeneous of weight $d$, that is 
	\begin{equation}
	\label{eq:fundamental}
	\transp{\sv} = \Hm \transp{\ev}.
	\end{equation}
	\begin{enumerate}
		\item Let $U \eqdef \vsg{H_{ij}: 1 \leq i\leq n-k,\;1\leq j \leq n}$, $V \eqdef \Sup(\ev)$  and  $W \defeq \Sup(\sv)$.
		$U$ and $W$ are known, whereas $V$ is unknown to the decoder. By definition $U$ is of dimension $d$ and it is convenient to bring in 
		a basis $\{f_1,\dots,f_d\}$ for it.
		\item It turns out that we typically have 
		$W = U\cdot V$. Moreover it is clear that in such a case
		$V \subset f_1^{-1}W \cap f_2^{-1}W \cdots f_d^{-1}W$. It also turns out that we typically have
		$$
		V = f_1^{-1}W \cap f_2^{-1}W \cdots f_d^{-1}W.
		$$
		$V$ is therefore computed by taking the intersection of all the $f_i^{-1}W$'s.
		\item Once we have the support of $\ev$ ($V=\Sup(\ev)$), the error $\ev=(e_1,\dots,e_n)$ can be recovered by solving the linear equation 
		$\Hm\transpose{\ev}=\transpose{\sv}$ with the additional constraints $e_i \in \Sup(\ev)$ for 
		$i \in \{1,\ldots,n\}$. There are in this case enough linear constraints to recover a unique $\ev$.
	\end{enumerate}

	\noindent The last algorithm seems to apply when there is a unique solution to \eqref{eq:fundamental}. It can also be used with a slight modification (by 
	adding ``erasures'' \cite{GRSZ14a}) for weights for which there are many solutions to it (this is typically the regime which is used for the RankSign scheme).
	It namely turns out, see \cite{GRSZ14a}, that this decoder can for a certain range of parameters be used for a large fraction of possible syndromes $\sv \in \Fqm^{n-k}$ to produce 
	an error $\ev$ of weight $w$ that satisfies \eqref{eq:fundamental}. It can even be required that $\Sup(\ev)$ contains a subspace $T$ of some dimension $t$. 
	Furthermore this procedure can also be generalized to a parity-check matrix of an augmented LRPC code.  More precisely to summarize
	the discussion that can be found in \cite{GRSZ14a,AGHRZ17}
	\begin{fact}\label{fa:LRPCdecoder}
		Let $\Hm$ be a random homogeneous matrix of weight $d$ in $\Fqm^{(n-k) \times n}$, 
		$\Hm' =  \begin{bmatrix} \Hm | \Rm \end{bmatrix} \Pm$ where 
		$\Rm$ is  a matrix with $t$ columns that has its entries in $\Fqm$ and $\Pm$ is a square and invertible matrix
		with entries in $\Fq$ that has the same number of columns as $\Hm'$.
		There is a probabilistic polynomial time algorithm that outputs for a large fraction of syndromes $\sv \in \Fqm^{n-k}$, subspaces $T$ of $\Fqm$ of $\Fq$--dimension $t'$,  an error $\ev$ of weight $w$ whose support contains the subspace $T$ that satisfies
		$$
		\Hm' \transp{\ev} = \transp{\sv}
		$$
		as soon as the parameters $n,k,t,t',d,w$ satisfy
		\begin{eqnarray}
		m &= &(w-t')(d+1)\\
		n-k & = &d(w-t-t') \label{eq:w}\\
		n & = &(n-k)d. \label{eq:problem}
		\end{eqnarray}
	\end{fact}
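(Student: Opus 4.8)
The plan is to reconstruct the augmented LRPC decoder underlying RankSign and to show that each of its three phases succeeds with overwhelming probability exactly when the three displayed relations hold; the equalities are to be read as the thresholds at which the relevant dimension counts balance. \textbf{Step 0 (stripping the structure).} Since $\Pm$ has entries in $\Fq$ and is invertible, it is a rank isometry, so $\Hm' \transp{\ev} = \transp{\sv}$ is equivalent, after the substitution $\ev \mapsto \ev \transp{\Pm}$ (which preserves the weight), to $[\Hm \mid \Rm]\transp{\ev} = \transp{\sv}$. Splitting the transformed error as $(\ev_1,\ev_2)$ along the block structure turns this into the homogeneous system $\Hm \transp{\ev_1} = \transp{\sv} - \Rm \transp{\ev_2}$. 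First I would treat the $t$ columns of $\Rm$ and the prescribed erasure subspace $T$ of dimension $t'$ as free parameters: forcing $T \subseteq \Sup(\ev)$ in advance reduces the number of genuinely unknown support dimensions from $w$ to $w-t'$, and choosing $\ev_2$ shifts the target to a reduced syndrome $\sv'$. The bookkeeping that matches the dimension of the reduced syndrome support against the $n-k$ available parity coordinates — after the augmentation has removed $dt$ of the product dimensions — is what produces \eqref{eq:w}.

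\textbf{Step 1 (support recovery).} Write $U = \vsg{H_{ij}}$ (dimension $d$), let $V$ be the unknown part of $\Sup(\ev)$ once $T$ is factored out, so $\dim V = w-t'$, and set $W = \Sup(\sv')$; fix a basis $f_1,\dots,f_d$ of $U$. Because every entry of $\sv'$ lies in $U\cdot V$, we have $W \subseteq U\cdot V$, and I would first show that for a large fraction of inputs this is an equality with $\dim W = \dim(U\cdot V) = d(w-t')$. From $W = U\cdot V$ one deduces $V \subseteq \bigcap_{i=1}^{d} f_i^{-1}W$, and the crux is that this too is generically an equality. A dimension count in the quotient $\Fqm/V$ shows that the $d$ subspaces $f_i^{-1}W$, each of dimension $d(w-t')$ and all containing $V$, intersect in $V$ alone as soon as $m \geq (d+1)(w-t')$; the boundary case is the first displayed relation $m = (w-t')(d+1)$, which is precisely the threshold guaranteeing that the intersection returns the support $V$ and nothing spurious.

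\textbf{Step 2 (solving for the error).} Once the support is known, $\ev$ is recovered by solving the $\Fq$-linear system $\Hm \transp{\ev} = \transp{\sv'}$ under the constraints that each coordinate lies in the recovered support. I would argue that for a generic homogeneous $\Hm$ this system has full rank and hence a unique solution; the count balancing the number of $\Fq$-equations against the number of coordinate unknowns is what forces the LRPC shape \eqref{eq:problem}, $n=(n-k)d$. Conjoining Steps~0--2, the decoder outputs in polynomial time an $\ev$ of weight $w$ with $\Sup(\ev)\supseteq T$ and $\Hm'\transp{\ev}=\transp{\sv}$ whenever the three relations hold, which is the assertion of the statement.

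\textbf{The main obstacle.} The genuinely delicate part is not the algorithm but the two genericity claims of Step~1: that $W=U\cdot V$ and that $\bigcap_i f_i^{-1}W = V$ hold \emph{for a large fraction of syndromes}. Each asserts that certain random subspaces of $\Fqm$ sit in general position, and I would establish it by bounding the probability that a random homogeneous matrix yields a degenerate product or an over-large intersection. Concretely this amounts to estimating, via Gaussian-binomial counts of subspaces of $\Fqm$ together with a union bound, the probability that the $\Fq$-bilinear map $U\times V \to \Fqm$ fails to have image of the expected dimension $d(w-t')$, and that $d$ subspaces through $V$ fail to meet minimally; both failure probabilities should decay like a negative power of $q$. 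Controlling these tail bounds uniformly over the admissible parameter regime, rather than relying on the heuristic ``typically'' used in the informal description of the decoder, is the step that requires the most care.
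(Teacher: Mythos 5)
First, a point of comparison that matters for this review: the paper does not prove Fact~\ref{fa:LRPCdecoder} at all. It is stated as a summary of the errors/erasures decoder of \cite{GRSZ14a,AGHRZ17} (building on the basic LRPC decoder of \cite{GMRZ13}) and is used as a black box. So your proposal cannot be compared against an internal proof; judged on its own merits, it contains a genuine gap --- in fact a step that provably fails.

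The failure is in Step~1. You take $V$ of dimension $w-t'$, set $W = \Sup(\sv')$, and claim that generically $W = U\cdot V$ with $\dim W = d(w-t')$. But $\sv'$ has $n-k$ coordinates, so $\dim W \leq n-k = d(w-t-t')$, which is strictly smaller than $d(w-t')$ as soon as $t \geq 1$ (and $t\geq 2$ in every RankSign parameter set). The claimed equality is therefore impossible, and the recovery mechanism collapses with it: $V \subseteq \bigcap_{i=1}^{d} f_i^{-1}W$ holds if and only if $U\cdot V \subseteq W$, which cannot happen since $\dim(U\cdot V) = d(w-t') > \dim W$. Concretely, for the $128$-bit NIST parameters $(n,k,m,d,t)=(20,10,21,2,2)$ the generic dimension of $\bigcap_i f_i^{-1}W$ is $\max\bigl(0,\, d(n-k)-(d-1)m\bigr) = \max(0,\,20-21) = 0$, so your support-recovery step returns the zero space. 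The root cause is that you transplanted the unique-decoding support recovery into the opposite regime: in RankSign the syndrome is an essentially arbitrary element of $\Fqm^{n-k}$ with very many weight-$w$ preimages, so its support is a random $(n-k)$-dimensional space and is not of the form $U\cdot V$. The paper flags exactly this distinction right before the Fact (the basic decoder ``seems to apply when there is a unique solution'', while the RankSign regime requires the modified procedure with erasures). In the argument of \cite{GRSZ14a,AGHRZ17} the logic runs the other way: the decoder \emph{chooses} a support (using the given space $T$ and the freedom coming from the $t$ columns of $\Rm$) so that the associated product space covers $\Sup(\sv)$, and the relation $m=(d+1)(w-t')$ is precisely what makes such a choice possible for most syndromes --- for instance, it is the condition under which the number $q^{(w-t')(m-w+t')}$ of candidate spaces $V$ matches the number of subspaces of dimension $d(w-t')$ to be covered. (It is a curiosity of your write-up that the same threshold formula drops out of your quotient-space count, but it does so from a premise that cannot hold.) Your Steps~0 and~2 are compatible with the real argument --- reading \eqref{eq:problem} as equating $\Fq$-equations with $\Fq$-unknowns in the final linear solve agrees with \cite{GMRZ13} --- but with Step~1 invalid, and with the probabilistic estimates behind ``a large fraction of syndromes'' explicitly deferred to future work in your last paragraph, the proposal does not establish the Fact.
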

	
	\section{Identity-Based-Encryption in code-based cryptography}
	\label{sec:IBEscheme} 
	
	We recall in this section the \cite{GHPT17a} approach for obtaining an IBE scheme whose security relies  on code-based assumptions. In some sense, this scheme 
	can be  viewed as an adaptation of the first quantum-safe IBE which was introduced by \cite{GPV08} in the paradigm of lattice-based cryptography. It relies among other things on two fundamental building blocks: a hash and sign primitive and an encryption scheme related to it. The adaptation relies on two building blocks:
	$i)$ a signature scheme, RankSign whose security relies on code-based assumptions for the rank metric, $ii)$ a new encryption scheme, namely RankPKE \cite{GHPT17a}, 
	based on the Rank Support Leaning (RSL) problem.
	\cite{GHPT17a}  gives a security proof of the IBE scheme that relies on two assumptions:
	$i)$ the key security of RankSign and $ii)$ the difficulty of RSL. Furthermore, the work of \cite{GHPT17a} can be easily generalized  to the more common Hamming metric.
	It is why we present in what follows the \cite{GHPT17a} IBE scheme with codes independently of the metric.

	Roughly speaking, an IBE is a specific public-key encryption scheme that allows senders to encrypt messages thanks to the receiver's identity (such as its email address). To permit this protocol there is a third party, say a Key Derivation Center, which owns a master secret-key MSK and an associated public-key MPK that allows to compute from \textit{any} identity $id$ a related secret quantity $\textup{sk}_{id}$ that will be used in a public-key encryption scheme involving an arbitrary sender and the receiver of identity $id$, with the pair of public/secret key $((id,\textup{MPK}),\textup{sk}_{id})$. 
	In this paradigm any identity $id$ needs to be matched with a secret key $\textup{sk}_{id}$ and to achieve this goal it was proposed in \cite{GPV08} to use a hash and sign primitive. Roughly speaking, for a trapdoor function $f$ and a hash function $\cH$ the Key Derivation Center will compute from $id$ the quantity $f^{-1}(\cH(id))$ which will be used as $\textup{sk}_{id}$. We summarize in Figure \ref{fig:IBEgene} how this IBE works.   In the case of \cite{GPV08}, signatures sample short vectors whose addition with the hash of the identity gives lattice points. Then this is used as a secret-key of an encryption scheme whose security relies on the hardness of the LWE problem (see \cite[Section 7.1, p26]{GPV08}).

	\begin{figure}
		\begin{center} 
			\begin{tikzpicture}
			\draw[fill = purple!7] (0,0) rectangle (6,1.8);
			\draw[fill = blue!7] (0.4,0.2) rectangle (2.2,1);
			\draw[fill = blue!7] (3.8,0.2) rectangle (5.6,1);
			\draw[->] (2.3,0.55) -- (3.7,0.55);
			\node at (3,1.5) {Key Derivation Center};
			\node at (1.3,0.75) {\small Key};
			\node at (1.3,0.38) {\small Setup};
			\node at (4.7,0.75) {\small Key};
			\node at (4.7,0.38) {\small Derivation};
			\node at (3,0.85) {\small MSK};
			\node at (3,0.25) {\small $id$};
			\draw[fill = purple!7] (4.7,-3.7) rectangle (7.5,-2);
			\draw[fill = blue!7] (4.8,-3) rectangle (6.8,-2.2);
			\draw[->] (4.7,-0.15) -- (5.8,-2.15);
			\node at (7.2,-1) {$\textup{sk}_{id}\eqdef f^{-1}(\cH(id))$};
			\node at (6.5,-3.4) {Receiver $id$};
			\node at (5.8,-2.6) {\small Decryption};
			\draw[->] (6.9,-2.6) -- (9,-2.6);
			\node at (8.5,-2.3) {$\small \textup{Dec}_{\textup{sk}_{id}}(\cv)$}; 
			\node at (-0.3,-1) {$\textup{MPK}$};
			\draw[fill = purple!7] (-1.5,-3.7) rectangle (1.3,-2);
			\draw[fill = blue!7] (1.2,-3) rectangle (-0.8,-2.2);
			\draw[->] (1.3,-0.15) -- (0.2,-2.15);
			\draw [->] (1.5,-2.6) -- (4.5,-2.6);
			\node at (3,-2.25) {\small $\cv \eqdef \textup{Enc}_{(\textup{MPK},id)}(\mv)$};
			\node at (0.2,-2.6) {\small Encryption};
			\draw[->] (-3,-2.6) -- (-1,-2.6);
			\node at (-2.3,-2.3) {$(\mv,id)$}; 
			\node at (-0.8,-3.4) { Sender};
			\end{tikzpicture}	
		\end{center} 
		\caption{IBE in the GPV context}
		\label{fig:IBEgene}
	\end{figure}
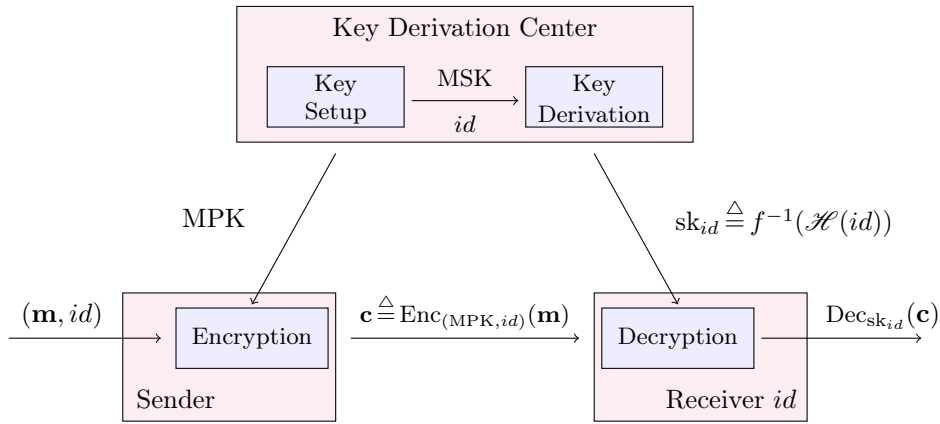 
	
	\medskip
	\noindent
	{\bf IBE in code-based cryptography.} We give now the general framework of \cite{GHPT17a} for obtaining a code-based IBE scheme. It is only given in the rank metric
	case in \cite{GHPT17a}, but the approach is really more general than this and can be given for the Hamming metric too. We will detail what happens for both metrics here. 
	As explained above, this scheme builds upon a hash and sign primitive and the authors of \cite{GHPT17a} proposed RankSign there but 
	in our description the signature scheme is just a black-box.

	Let $\cC_{\sgn}$ be a code of length $n_{\sgn}$ and dimension $k_{\sgn}$ for which there is a trapdoor that enables to compute for any $\yv \in \F_{2}^{n_{\sgn}}$ a codeword $\cv_{\yv} \in \cC_{\sgn}$ at distance $w_{\sgn}$. Let $w_{\dec} $ be an integer, $\cC_{\dec}$ be a code of length $n_{\dec}$ and dimension $k_{\dec}$ such that it exists a polynomial algorithm to decode a linear (in the length) error weight. Let $\Gm_{\cC_{\sgn}}$ and $\Gm_{\cC_{\dec}}$ be generator matrices of the codes $\cC_{\sgn}$ and $\cC_{\dec}$ respectively. Then it is proposed in \cite{GHPT17a} to set master secret and public keys as:
	\begin{itemize} 
		\item MSK be the trapdoor which enables to decode at distance $w_{\sgn}$ in $\cC_{\sgn}$; 	
		\item  $\textup{MPK} \eqdef \left( \cC_{\sgn},\cC_{\dec} \right)$. 
	\end{itemize} 
	Let $id$ be an identity and $\cH$ be a hash function whose range is $\F_{2}^{n_{\sgn}}$ or $\Fqm^{n_{\sgn}}$ according to the metric which is used. The key derivation center  computes with MSK and $id$ a vector $\uv_{id}$ such that:
	\begin{equation} \label{eq:IBEsgn} 
	|\uv_{id}\Gm_{\cC_{\sgn}} - \cH(id)| = w_{\sgn}\mbox{ where }| \cdot | \mbox{ denotes either the Hamming or rank metric}
	\end{equation}
	This is used as the secret key associated to the identity $id$:
	\begin{itemize}
		\item $\textup{sk}_{id} \eqdef \uv_{id}$. 
	\end{itemize}
	We are now ready to present the encryption scheme whose public/secret key is $((\Gm_{\cC_{\sgn}},\Gm_{\cC_{\dec}},id),\uv_{id})$ and which in the particular case of the rank metric is the RankPKE 
	scheme introduced in \cite{GHPT17a}. This primitive is related to the work of Alekhnovich \cite{A11}.

	\begin{itemize} 
		\item {\bf Encryption.} Let $\mv$ be the message that will be encrypted. We will denote by $\F$ the finite field $\F_{2}$ or $\Fqm$ depending on the Hamming or rank metric. The authors of \cite{GHPT17a} introduced the trapdoor function:
		\begin{displaymath}
		\begin{array}{lccc}
		g_{\Gm_{\cC_{\sgn}},\Gm_{\cC_{\dec}},id} :  &\F^{k_{\dec}} & \longrightarrow & \F^{(k_{\sgn}+1) \times n_{\dec}}\\
		&\mv& \longmapsto &\begin{bmatrix} \Gm_{\cC_{\sgn}}\Em \\ \cH(id)\Em + \mv\Gm_{\cC_{\dec}} \end{bmatrix} 
		\end{array}
		\end{displaymath}
		where $\Em$ has a size $n_{\sgn} \times n_{\dec}$. In the case of the rank metric $\Em$ is a matrix uniformly picked at random among the homogeneous matrices of weight $w_{\dec}$ and in the case of the Hamming metric, $\Em$ is  picked uniformly at random among the matrices whose columns have all  weight $w_{\dec}$.

		\item {\bf Decryption.} The secret key $\uv_{id}$ is  used as
		\begin{align*} 
		(\uv_{id},-1)g_{\Gm_{\cC_{\sgn}},\Gm_{\cC_{\dec}},id}(\mv) &= (\uv_{id},-1) \begin{bmatrix} \Gm_{\cC_{\sgn}}\Em \\ \cH(id)\Em + \mv\Gm_{\cC_{\dec}} \end{bmatrix} \\
		&= \left( \uv_{id}\Gm_{\cC_{\sgn}} - \cH(id)\right)\Em - \mv\Gm_{\cC_{\dec}} \\ 
		\end{align*} 
		It can be verified that under certain restrictions on $w_{\sgn}$ and $w_{\dec}$, the weight of the vector $\left( \uv_{id}\Gm_{\cC_{\sgn}} - \cH(id)\right)\Em$ is low enough, so that a decoding algorithm
		for $\cC_{\dec}$ will recover $\mv$. 
		The following proposition gives a constraint on these parameters so that decoding is possible in principle. 
		\begin{proposition} \label{propo:paramIBE} 
			In order to be able to decode asymptotically at constant rate $R$, there should exist an $\varepsilon(R) >0$ 
			such that all the parameters $n_{\sgn}, w_{\sgn}$ and $w_{\dec}$ have to verify 
			\begin{itemize}
				\item in the rank metric case
				\begin{equation} \label{eq:paramIBEH}
				w_{\sgn}w_{\dec} = (1-\varepsilon(R)) \min(m,n_{\dec})
				\end{equation} 
				\item in the Hamming metric case
				\begin{equation}
				\label{eq:paramIBER}
				w_{\sgn}w_{\dec} = O(n_{\sgn}).
				\end{equation}
			\end{itemize}
		\end{proposition}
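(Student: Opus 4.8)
The plan is to isolate the word that the decryption algorithm is forced to decode and to show that its weight is controlled by the product $w_{\sgn}w_{\dec}$, so that squeezing it under the decoding radius of a rate-$R$ code forces the two stated relations. Write $\sv \eqdef \uv_{id}\Gm_{\cC_{\sgn}} - \cH(id)$, so that by \eqref{eq:IBEsgn} we have $|\sv| = w_{\sgn}$ in the relevant metric, and the decryption computation rewrites as $(\uv_{id},-1)\,g(\mv) = \sv\Em - \mv\Gm_{\cC_{\dec}}$. Recovering $\mv$ therefore amounts to decoding the codeword $\mv\Gm_{\cC_{\dec}}$ of $\cC_{\dec}$ from the noisy word $\sv\Em - \mv\Gm_{\cC_{\dec}}$, which succeeds exactly when $|\sv\Em|$ lies within the decoding radius of $\cC_{\dec}$. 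For a code of length $n_{\dec}$ and rate $R$ this radius is a constant fraction $c(R)<1$ of the maximal attainable weight: in the Hamming metric that maximum is $n_{\dec}$, giving radius $\Theta\left((1-R)n_{\dec}\right)$, while in the rank metric the maximum is $\min(m,n_{\dec})$ and the rank Singleton distance $\ws$ yields radius $\Theta\left((1-R)\min(m,n_{\dec})\right)$. Everything thus reduces to estimating $|\sv\Em|$.

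In the rank metric case the key step is a subspace-product bound. Let $D \eqdef \vsg{E_{ij}}$ be the common support of the entries of the homogeneous matrix $\Em$, so $\dim D = w_{\dec}$, and recall $\dim \Sup(\sv) = w_{\sgn}$. Each coordinate $(\sv\Em)_i = \sum_j s_j E_{ji}$ is an $\Fq$-combination of products lying in $\Sup(\sv)\cdot D$, whence $\Sup(\sv\Em) \subseteq \Sup(\sv)\cdot D$ and therefore
\begin{equation*}
|\sv\Em| = \dim \Sup(\sv\Em) \leq \dim\left(\Sup(\sv)\cdot D\right) \leq w_{\sgn}w_{\dec}.
\end{equation*}
Since a rank weight never exceeds $\min(m,n_{\dec})$, we get $|\sv\Em|\leq \min(w_{\sgn}w_{\dec},\min(m,n_{\dec}))$. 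Conversely, for generic $\Em$ the product $\Sup(\sv)\cdot D$ attains its full dimension $\min(w_{\sgn}w_{\dec},m)$ and the coordinates of $\sv\Em$ span it, so typically $|\sv\Em| = \min(w_{\sgn}w_{\dec},\min(m,n_{\dec}))$. Requiring this to fall below the radius $\Theta\left((1-R)\min(m,n_{\dec})\right)$ forces $w_{\sgn}w_{\dec}$ to be a constant fraction $(1-\varepsilon(R))$ of $\min(m,n_{\dec})$, which is \eqref{eq:paramIBEH}.

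For the Hamming metric the bound is combinatorial rather than dimensional. Only the columns of $\Em$ having a nonzero entry in one of the $w_{\sgn}$ rows indexed by $\Sup(\sv)$ can make the corresponding coordinate of $\sv\Em$ nonzero, so the Hamming weight of $\sv\Em$ is at most the number of such columns. Each of the $n_{\dec}$ columns of $\Em$ has weight $w_{\dec}$, hence $\Em$ carries $w_{\dec}n_{\dec}$ nonzeros; for a uniformly random $\Em$ with all columns of weight $w_{\dec}$, by symmetry each row receives on average $\frac{w_{\dec}n_{\dec}}{n_{\sgn}}$ of them, so the $w_{\sgn}$ rows of $\Sup(\sv)$ meet about $\frac{w_{\sgn}w_{\dec}n_{\dec}}{n_{\sgn}}$ columns. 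Thus typically $|\sv\Em| = \Theta\left(\frac{w_{\sgn}w_{\dec}}{n_{\sgn}}n_{\dec}\right)$, and requiring this to stay below the radius $\Theta\left((1-R)n_{\dec}\right)$ gives $\frac{w_{\sgn}w_{\dec}}{n_{\sgn}} = O(1)$, i.e. $w_{\sgn}w_{\dec} = O(n_{\sgn})$, which is \eqref{eq:paramIBER}.

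The main obstacle is the genericity claim underlying both necessity directions: one must argue that for the overwhelming fraction of matrices $\Em$ drawn as in the encryption step the weight of $\sv\Em$ actually attains, up to constants, the upper bounds above, rather than collapsing through algebraic cancellation — full-dimensionality of $\Sup(\sv)\cdot D$ in the rank case, and absence of column cancellations in the Hamming case. Pinning down the exact constant hidden in $\varepsilon(R)$ also depends on which decoding radius (unique decoding versus a list/Gilbert--Varshamov threshold) one insists upon; but since only the scaling in $\min(m,n_{\dec})$ and in $n_{\sgn}$ matters for the statement, any such choice yields the asserted asymptotic constraints.
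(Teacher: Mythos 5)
Your overall strategy is the same as the paper's: set $\sv \eqdef \uv_{id}\Gm_{\cC_{\sgn}} - \cH(id)$, estimate the weight of the error term $\sv\Em$ that the decryption step must remove, and demand that this weight fit under the decoding radius of a constant-rate code. In the rank metric case your argument essentially coincides with the paper's proof: the paper simply cites \cite[\S 3.2]{GHPT17a} for the fact that $|\sv\Em|$ equals $w_{\sgn}w_{\dec}$ with high probability, and that cited fact is exactly your subspace-product upper bound $\Sup(\sv\Em) \subseteq \Sup(\sv)\cdot F$ (with $F$ the $w_{\dec}$-dimensional space spanned by the entries of $\Em$) together with the genericity claim you flag as your ``main obstacle''. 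So there the only difference is how much is taken off the shelf.

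The Hamming case is where you genuinely diverge, and where your argument has a flaw. Your estimate $|\sv\Em| = \Theta\bigl(\tfrac{w_{\sgn}w_{\dec}}{n_{\sgn}}n_{\dec}\bigr)$, obtained by counting columns of $\Em$ that meet $\Sup(\sv)$, is only valid in the regime $w_{\sgn}w_{\dec} = O(n_{\sgn})$ --- that is, exactly when the conclusion you are trying to force already holds. Since the proposition is a \emph{necessity} statement, the regime the proof must actually rule out is $w_{\sgn}w_{\dec}/n_{\sgn} \rightarrow \infty$, and there your formula is false: it would exceed $n_{\dec}$, whereas the true weight saturates. Moreover the saturation value is governed by parity over $\F_{2}$: a column meeting $\Sup(\sv)$ in an even number of positions contributes nothing, so each bit of $\sv\Em$ equals $1$ with probability $(1/2)\bigl(1-e^{-2w_{\sgn}w_{\dec}/n_{\sgn}(1+O(1))}\bigr)$ --- this is the formula the paper uses --- and the relative weight concentrates near $1/2$, not near $1$. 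The paper's argument then goes through in every regime, because decoding at constant rate requires relative weight bounded by a constant strictly below $1/2$. Your proof can be repaired by replacing the linear $\Theta$ estimate with this per-bit parity computation (or at least by observing the saturation near $n_{\dec}/2$ and invoking a Plotkin-type obstruction to decoding at that relative weight), but as written the key quantitative step fails precisely in the regime the necessity claim must address.
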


		\begin{proof} We separate the proof in two parts.

			{\em Rank metric.} In this case, as proved in \cite[\S 3.2]{GHPT17a} the rank weight of the error term $\left( \uv_{id}\Gm_{\cC_{\sgn}} - \cH(id)\right)\Em$ is 
			with high probability $w_{\sgn}w_{\dec}$. Recall that $\cC_{\dec}$ is a code over the alphabet $\Fqm$. A necessary condition to be able to decode with a fixed rate code is that the dimension of the support of the error 
			is at most some fraction of 
			the dimension $m$ of the whole space $\Fqm$ and of the length $n_{\dec}$ of the code we decode.
			This means that $w_{\sgn}w_{\dec} \leq (1-\varepsilon(R)) \min(m,n_{\dec})$.

			{\em Hamming metric.} Recall that $\uv_{id}\Gm_{\cC_{\sgn}} - \cH(id)$ has Hamming weight $w_{\sgn}$ (see \eqref{eq:IBEsgn}). It is easily verified that the probability for one bit of $\left( \uv_{id}\Gm_{\cC_{\sgn}} - \cH(id)\right)\Em$ to be equal to $1$ is of the form $(1/2)(1-e^{-2w_{\sgn}w_{\dec}/n(1+O(1))})$ when the columns of $\Em$ are picked uniformly at random among the words of Hamming weight $w_{\dec}$. Furthermore 
			the relative weight of $\left( \uv_{id}\Gm_{\cC_{\sgn}} - \cH(id)\right)\Em$ concentrates around this probability 
			and a necessary condition to be able to decode at constant rate asymptotically is that this relative weight is a constant $<1/2$. Therefore it is necessary to have $w_{\sgn}w_{\dec} = O(n_{\sgn})$.  
			
		\end{proof} 
		
	\end{itemize}
	
	The constraint set on the parameters by this proposition is crucial to instantiate the IBE in code-based cryptography. 
	Unfortunately, this constraint implies a fatal weakness for the Hamming based scheme and a hard to meet condition for the rank metric in order to have
	a secure scheme as we will see in what follows.
	\newline

	\noindent
	{\bf The RSL problem.} We recall here the assumption upon which the security of RankPKE relies (the previous encryption scheme in rank metric), 
	namely the Rank Support Leaning (RSL) problem introduced in \cite{GHPT17a}. This problem is a rank syndrome decoding problem with syndromes 
	that are associated to errors that all share the same support which is the secret.

	\begin{problem}[\textup{RSL} - Rank Support Learning]\label{prob:RSL}$ $\\
		\textup{\em Parameters:} $n,k,N,w$\\
		\textup{\em Instance:} $(\Am,\Am\Em)$ where $\Am$ is a full rank matrix of size $(n-k)\times n$, $\Em$ a matrix of size 
		$n \times N$ where all its coefficients belong to a same subspace $F$ of $\Fqm$ of dimension $w$ \\
		\textup{\em Output:} the subspace $F$.\\
		The decisional version of \textup{RSL}, namely \textup{DRSL}, is to distinguish distributions $(\Am,\Am\Em)$ from $(\Am,\Rm)$ where $\Am,\Rm$ and $\Em$ are random variables whose distribution is uniform over matrices of size $(n-k)\times n, (n-k) \times N$ and over homogeneous matrices of size $n \times N$ and weight $w$.
	\end{problem} 
	
	\begin{remark} Let $(\Am,\Am\Em)$ be an instance of \textup{RSL}. The matrix $\Am$ is of full-rank of size $(n-k)\times n$ and we can perform Gaussian elimination on its rows to get a matrix $\Sm$ such that $\Sm\Am = \lbrack I_{n-k}|\Am'\rbrack$. The pair $(\Sm\Am,\Sm\Am\Em)$ is still an instance of \textup{RSL} with the same parameters and secret subspace $F$, it is why we can always assume that for any instance of \textup{RSL} the matrix $\Am$ is in systematic form.

	\end{remark}

	As proved in \cite[\S 3.3, p13, Theorem 1]{GHPT17a} the security of RankPKE relies on the DRSL problem.

	\section{Attack on RankSign}

	\subsection{The problem with RankSign : low rank codewords in the augmented LRPC code} 
	\label{sub:lowdistrib}
	
	A natural way to attack RankSign is to find low weight codewords in the dual of the augmented LRPC code. Recall that the 
	public parity-check matrix used in the scheme is a matrix $\Hpub$ where
	$$
	\Hpub = \Qm \Hm'
	$$
	with 
	$\Hm' = \begin{bmatrix} \Hm | \Rm \end{bmatrix} \Pm$ where $\Hm$ is a homogeneous matrix of rank $d$ over 
	$\Fqm$,  
	$\Rm$ is  a matrix with $t$ columns that has its entries in $\Fqm$, $\Pm$ is a square and invertible matrix
	with entries in $\Fq$ that has the same number of columns as $\Hm'$ and $\Qm$ is a square and invertible matrix over $\Fqm$ which has the same number of rows as $\Hm'$.
	If we call $\Cpub$ the ``public code'' with parity-check matrix 	$\Hpub$, then 
	the dual code $\Cpub^\perp$ that has for generator matrix $\Hpub$ has codewords of weight $\leq d+t$ since 
	rows of $\Hm' \Pm$ belong to this code, and all of its rows have rank weight $\leq d+t$ since the rows of 
	$\Hm'$ have weight at most $d+t$ and $\Pm$ is an isometry for the rank metric.
	The authors have chosen the parameters of the RankSign scheme so that finding codewords of weight 
	$t+d$ in $\Cpub^\perp$ is above the security level of the scheme. However, it turns out that 
	due to the peculiar parameters chosen in the RankSign scheme (see Fact \ref{fa:LRPCdecoder}), $\Cpub$ has many very low weight codewords.
	This is the main problem in RankSign. Before we give a precise statement together with its proof, we
	will give a general result showing that LRPC codes may have under certain circumstances low weight codewords.

	\begin{lemma} \label{lem:pcp}
		Let $\cC$ be an \textup{LRPC} code of length $n$ and dimension $k$ over $\Fqm$ that is associated to an homogeneous matrix $\Hm$ that has all its entries in 
		a subspace $F$ of $\Fqm$. Furthermore we suppose there exists a subspace $F'$ of $\Fqm$ such that 
		$$
		(n-k) \dim(F\cdot F') < n \dim F'.
		$$
		Then there exist non-zero codewords in the \textup{LRPC} code whose support is included in $F'$. They are therefore of rank weight at most $\dim F'$.
		Furthermore this set of codewords, that is 
		$$\cC' \eqdef \left\{\cv \in \cC: c_i \in F',\;\forall i \in \IInt{1}{n} \right\}$$ 
		forms an $\Fq$ subspace of $\Fqm^n$ that is of dimension $\geq n \dim F' -(n-k) \dim(F\cdot F')$.
	\end{lemma}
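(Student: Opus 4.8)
The plan is to recast the condition ``$\cv \in \cC$ with every coordinate lying in $F'$'' as a homogeneous $\Fq$-linear system and then compare the number of unknowns with the number of constraints. First I would fix an $\Fq$-basis $\{f'_1, \ldots, f'_{d'}\}$ of $F'$, where $d' = \dim F'$, and parametrize a candidate codeword $\cv = (c_1, \ldots, c_n)$ with all $c_j \in F'$ by writing $c_j = \sum_{l=1}^{d'} x_{jl} f'_l$ with scalars $x_{jl} \in \Fq$. This describes the set of all vectors of $\Fqm^n$ whose support is contained in $F'$ using exactly $n d'$ free $\Fq$-variables.

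Next I would translate membership in $\cC$, i.e.\ the parity-check condition $\Hm \transp{\cv} = 0$, into $\Fq$-linear equations on the $x_{jl}$. The key observation is that the $i$-th syndrome coordinate is $\sum_j H_{ij} c_j$, and since every entry $H_{ij} \in F$ while every $c_j \in F'$, each product $H_{ij} c_j$, and hence the whole sum, lies in the product space $F \cdot F'$. Fixing an $\Fq$-basis of $F \cdot F'$ of size $\dim(F\cdot F')$, the vanishing of an element of $F\cdot F'$ is equivalent to the vanishing of its (at most) $\dim(F\cdot F')$ coordinates in that basis. Thus each of the $n-k$ rows of $\Hm$ contributes at most $\dim(F\cdot F')$ homogeneous $\Fq$-linear equations in the $x_{jl}$, for a total of at most $(n-k)\dim(F\cdot F')$ equations.

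The conclusion is then a dimension count. The solution set $\cC'$ is precisely the kernel of this $\Fq$-linear system, so it is automatically an $\Fq$-subspace of $\Fqm^n$, and its dimension is at least the number of variables minus the number of equations, namely $n\dim F' - (n-k)\dim(F\cdot F')$. The hypothesis $(n-k)\dim(F\cdot F') < n\dim F'$ makes this lower bound strictly positive, which forces the existence of a nonzero element of $\cC'$; any such codeword has $\Sup(\cv) \subseteq F'$, and using the identity $|\cv| = \dim \Sup(\cv)$ recalled earlier we get $|\cv| \le \dim F'$.

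I do not foresee a serious obstacle, since the argument reduces to a product-space bound feeding into the rank--nullity inequality. The one point requiring care is the estimate that each row contributes \emph{at most} $\dim(F\cdot F')$ constraints rather than exactly that many (the sum need not span all of $F\cdot F'$), which is why the final dimension estimate is an inequality; this suffices for the existence statement and for the claimed lower bound on $\dim \cC'$.
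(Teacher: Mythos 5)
Your proposal is correct and follows essentially the same route as the paper's proof: parametrize the candidate codeword in an $\Fq$-basis of $F'$ ($n\dim F'$ unknowns), observe that each syndrome coordinate lies in $F\cdot F'$ so that each of the $n-k$ parity checks expands into at most $\dim(F\cdot F')$ linear equations over $\Fq$, and conclude by rank--nullity. Your added remark that rows may contribute fewer than $\dim(F\cdot F')$ independent constraints is a correct refinement that the paper leaves implicit, and it does not change the argument.
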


	\begin{proof} Denote the entry in row $i$ and column $j$ of $\Hm$ by $H_{i,j}$. A codeword $\cv$ of the LRPC code satisfies
		\begin{equation} \label{eq:vec} 
		\forall i \in \IInt{1}{n-k},  \quad \sum_{j=1}^{n} H_{i,j}c_{j} = 0.
		\end{equation} 
		Looking in addition for a codeword $\cv$ that has all its entries in $F'$ and expressing these $n-k$ linear equations 
		over $\Fqm$ in a basis of $F \cdot F'$ (since 
		$\sum_{j=1}^{n} H_{i,j}c_{j}$ belongs by definition to $F \cdot F'$) and expressing each $c_j$ in a $\Fq$ basis $\{f'_1,\dots,f'_{d'}\}$ of $F'$ as
		$c_j = \sum_{\ell=1}^{d'} c_{j,\ell}f'_\ell$ we obtain $(n-k) \dim(F \cdot F')$ linear equations over $\Fq$ involving $n \dim F'$ unknowns (the $c_{j,\ell}$'s) in $\Fq$.
		The solution space is therefore of dimension greater $\geq n \dim F' -(n-k) \dim(F\cdot F')$.  
	\end{proof}

	\begin{remark} This theorem proves the existence of low rank codewords in an \textup{LRPC}-code under some conditions but it does not give any efficient way to find them.
	\end{remark}

	By using this lemma, we  will prove the following corollary that explains that the augmented LRPC codes that are
	used in the RankSign signature necessarily contain many rank weight $2$ codewords. This is in a sense a consequence of the constraint \eqref{eq:problem} on the parameters of RankSign.	
	\begin{corollary} \label{cor:rksgn} 
		Let $\Cpub$ be an $\lbrack n+t,k+t \rbrack$ public code of \textup{RankSign} over $\Fqm$ which has been obtained from an $\lbrack n,k \rbrack$ \textup{LRPC}-code that is associated to a homogeneous matrix $\Hm$ that has all its entries in 
		an $\Fq$ subspace $F$ of $\Fqm$. Consider a subspace $F'$ of $F$ of dimension $2$ and let
		$$
		\Cpub' \eqdef \left\{\cv \in \Cpub: c_i \in F',\;\forall i \in \llbracket 1,n+t\rrbracket \right\}.
		$$
		$\Cpub'$ is an $\Fq$ subspace of $\Fqm^{n+t}$.
		If \eqref{eq:problem} holds, that is
		$n = (n-k) d$, then 
		$$
		\dim_{\Fq} \Cpub' \geq n/d.
		$$
	\end{corollary}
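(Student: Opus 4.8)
The plan is to derive the corollary from Lemma~\ref{lem:pcp} applied to the underlying $[n,k]$ LRPC code, combined with an elementary dimension bound for the product space $F\cdot F'$ that uses the hypothesis $F'\subseteq F$ in an essential way.

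First I would strip away the augmentation and the scramblers $\Qm,\Pm$ to reduce to the plain LRPC code. Recall $\Hpub=\Qm\begin{bmatrix}\Hm|\Rm\end{bmatrix}\Pm$ with $\Qm$ invertible over $\Fqm$ and $\Pm\in\Fq^{(n+t)\times(n+t)}$ invertible. For $\cv\in\Fqm^{n+t}$ put $\mathbf{d}=\cv\transp{\Pm}$; since $\Pm$ has entries in $\Fq$ it is an isometry for the rank metric, so $\Sup(\mathbf{d})=\Sup(\cv)$, and $\Hpub\transp{\cv}=0$ is equivalent to $\begin{bmatrix}\Hm|\Rm\end{bmatrix}\transp{\mathbf{d}}=0$. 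Splitting $\mathbf{d}=(\mathbf{d}_1,\mathbf{d}_2)$ with $\mathbf{d}_1\in\Fqm^{n}$ and $\mathbf{d}_2\in\Fqm^{t}$, the choice $\mathbf{d}_2=0$ forces $\Hm\transp{\mathbf{d}_1}=0$. Consequently every $\mathbf{d}_1$ in the LRPC code $\cC=\{\mathbf{d}_1:\Hm\transp{\mathbf{d}_1}=0\}$ whose support lies in $F'$ yields, through $\cv=(\mathbf{d}_1,\mathbf{0})\transp{(\Pm^{-1})}$, an element of $\Cpub'$; this map is $\Fq$-linear and injective, so $\dim_{\Fq}\Cpub'\ge\dim_{\Fq}\cC'$, where $\cC'=\{\mathbf{d}_1\in\cC:(\mathbf{d}_1)_i\in F'\}$ is precisely the subspace furnished by Lemma~\ref{lem:pcp}. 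Note that this only produces a subspace of $\Cpub'$, which is all we need for a lower bound.

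Next I would invoke Lemma~\ref{lem:pcp} to obtain $\dim_{\Fq}\cC'\ge n\dim F'-(n-k)\dim(F\cdot F')$ with $\dim F'=2$ and $\dim F=d$, the weight of the homogeneous matrix $\Hm$. The heart of the argument is the bound $\dim(F\cdot F')\le 2d-1$. Writing $F'=\vsg{a,b}$ with $a,b\in F$ nonzero and $\Fq$-independent, we have $F\cdot F'=aF+bF$, each summand of dimension $d$; because $F'\subseteq F$, the nonzero element $ab$ lies in $aF$ (as $b\in F$) and in $bF$ (as $a\in F$), so $aF\cap bF\ne\{0\}$ and hence $\dim(F\cdot F')\le 2d-1$.

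Finally I would substitute. The constraint \eqref{eq:problem}, $n=(n-k)d$, gives $n-k=n/d$, and then $\dim_{\Fq}\cC'\ge 2n-(n/d)(2d-1)=n/d$, so $\dim_{\Fq}\Cpub'\ge n/d$ as claimed; one checks en passant that the hypothesis $(n-k)\dim(F\cdot F')<n\dim F'$ of Lemma~\ref{lem:pcp} indeed holds, since $(n/d)(2d-1)<2n$. The only delicate step is the product inequality $\dim(F\cdot F')\le 2d-1$: the naive bound $\dim(F\cdot F')\le 2d$ is exactly too lossy, yielding merely $\dim_{\Fq}\cC'\ge 0$, and one must exploit the forced overlap $ab\in aF\cap bF$ — available precisely because $F'\subseteq F$ — to recover the missing dimension. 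Everything else is routine bookkeeping with the isometry $\Pm$ and the count supplied by Lemma~\ref{lem:pcp}.
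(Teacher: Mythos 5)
Your proposal is correct and follows essentially the same route as the paper's proof: embed the LRPC codewords with support in $F'$, padded with $t$ zeros, into $\Cpub$ via $\transp{(\Pm^{-1})}$, apply Lemma~\ref{lem:pcp}, and conclude with the arithmetic forced by $n=(n-k)d$. The only cosmetic difference is how the key bound $\dim(F\cdot F')\le 2d-1$ is justified — the paper counts the spanning products $x_ix_j$ and notes the collision $x_1x_2=x_2x_1$, while you argue $ab\in aF\cap bF$ so that $\dim(aF+bF)\le 2d-1$ — which is the same exploitation of $F'\subseteq F$ in different clothing.
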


	\begin{proof} Let 
		$\Hpub\in \Fqm^{(n-k)\times (n+t)}$
		be the public parity-check matrix for the RankSign public code $\Cpub$. Recall that $\Hpub$ has been obtained as $\Hpub = \Qm \begin{bmatrix}  \Hm | \Rm \end{bmatrix} \Pm$ where: 
		\begin{itemize}
			\item $\Pm$ is a non-singular matrix with entries in  $\Fq$ of size $(n+t) \times (n+t)$,

			\item $\Qm$ is an invertible matrix of $\Fqm$ of size $(n-k) \times (n-k)$,

			\item $\Rm$ is a random matrix of $\Fqm$ of size $(n-k) \times t$,

			\item $\Hm$ is a homogeneous  $(n-k) \times n$ matrix of weight $d$ with all its entries in $F$. 		\end{itemize}
		Choose a basis $\{x_1,x_2,\dots,x_d\}$ of $F$ such that $\{x_1,x_2\}$ is a basis of $F'$.
		We observe now that
		$$
		F \cdot F' = \vsg{x_i x_j: i \in \IInt{1}{d},\;j \in \IInt{1}{2}}.
		$$
		The cardinality of the set $\{x_i x_j:  i \in \IInt{1}{d},\;j \in \IInt{1}{2}\}$ is actually $2d-1$ 
		because $x_1 x_2 = x_2 x_1$.
		This implies that 
		$$
		\dim(F \cdot F') \leq 2d - 1.
		$$
		It leads to the following inequalities,
		\begin{align*}
		n \dim(F') - (n-k) \dim(F\cdot F') &\geq 2n - (n-k)(2d-1) \\ 
		&=  2d(n-k) - (n-k)(2d-1) \mbox{ (since $n=(n-k)d$)} \\
		&= n-k \\
		&= \frac{n}{d} \quad\mbox{  (since $n=(n-k)d$)}. 
		\end{align*}
		Let $\Clrpc$ be the LRPC code of weight $d$ associated to the parity-check matrix $\Hm$ and let $\Clrpc'$ be 
		an $\Fq$ subspace of it  that is defined by
		$$
		\Clrpc' \eqdef \left\{\cv \in \Clrpc: c_i \in F',\;\forall i \in \llbracket 1,n\rrbracket \right\}.
		$$ 
		By applying Lemma \ref{lem:pcp} we know that
		\begin{equation}\label{eq:clrpc'}
		\dim_{\Fq} \Clrpc' \geq \frac{n}{d}.
		\end{equation}
		Consider now
		$$
		\Cpub' \eqdef \{(\clrpc,\mathbf{0}_{t})\transp{(\Pm^{-1})}: \clrpc \in \Clrpc'\},
		$$
		where  $\mathbf{0}_{t}$  denotes the vector with $t$ zeros.
		From \eqref{eq:clrpc'} we deduce that
		$$
		\dim_{\Fq}\Cpub'\geq \frac{n}{d}.
		$$
		Moreover the entries of any element $\cv'$ in $\Cpub'$ belong to $F'$ because the entries of $\Pm$ are in 
		$\Fq$. Let us now prove  that $\Cpub'$ is contained in $\Cpub$.  To verify this, consider an element
		$\cv'$ in $\Cpub'$. It can be written as
		$$
		\cv'= (\clrpc,\mathbf{0}_{t})\transp{(\Pm^{-1})}. 	$$
		We observe now that
		\begin{align*} 
		\Hpub \transp{\cv' }&= \Hpub \Pm^{-1}\transp{(\clrpc,\mathbf{0}_{t})}\\
		&= \Qm \begin{bmatrix} \Hm | \Rm \end{bmatrix} \Pm \Pm^{-1}\transp{(\clrpc,\mathbf{0}_{t})}\\
		&= \Qm\begin{bmatrix} \Hm | \Rm \end{bmatrix}\transp{(\clrpc,\mathbf{0}_{t})} \\
		&= \Qm \Hm \transp{\clrpc} \quad ( \;\Rm \in \Fqm^{(n-k) \times t} \;) \\
		&= \mathbf{0} \quad  (\clrpc \mbox{ belongs to the code of parity-check matrix } \Hm)
		\end{align*} 
		This proves that $\Cpub' \subset \Cpub$ which concludes the proof.  
	\end{proof}

	\subsection{Weight $1$ codewords in a projected code}\label{ss:w1pc}
	Corollary \ref{cor:rksgn} shows that there are many weight $2$ codewords in $\Cpub$. We can even restrict our 
	search further by noticing that without loss of generality we may assume that the space $F$ in which 
	the entries of the secret parity-check matrix $\Hm$ of the LRPC code are taken contains $1$. Indeed,  for any $\alpha$ in $\Fqm^{\times}$, $\alpha \Hm$ is also a parity-check matrix of the LRPC code
	and has its entries in $\alpha F$. By choosing $\alpha$ such that $\alpha F$ contains $1$ we get our claim.
	
	Consider now a  supplementary space $V$ of $\vsg{1}=\Fq$ with respect to $\Fqm$, that is an $\Fq$-space
	of dimension $m-1$ such that 
	$$
	\Fqm = V \oplus \Fq.
	$$
	The previous discussion implies that there is a matrix-code in $\Fq^{(m-1)\times (n+t)}$, deduced from $\Cpub$ by projecting the entries onto $V$, that contains codewords of weight $1$.
	More specifically,  consider an $\Fq$ basis $\{\beta_1,\beta_2, \cdots, \beta_{m}\}$ of $\Fqm$ such that 
	$\beta_{m}=1$ and for 
	$\cv=(c_i)_{1 \leq i \leq n+t} \in \Fqm^{n+t}$ consider 
	$$
	\Matp{\cv} = (M_{ij})_{\substack{1 \leq i \leq m-1\\ 1 \leq j \leq n+t}} \in \Fq^{(m-1)\times (n+t)}  
	$$
	where $c_j = \sum_{i=1}^m M_{ij} \beta_i$.
	Now let
	$\Cpubp$ be the matrix-code in $\Fq^{(m-1)\times (n+t)}$ defined by
	$$
	\Cpubp \eqdef \left\{ \Matp{\cv} : \cv \in \Cpub \right\}.
	$$ 
	It is clear that
	\begin{fact}\label{fac:rw1}
		$\Cpubp$ contains codewords of rank weight $1$.
	\end{fact}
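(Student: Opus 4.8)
The plan is to produce an explicit rank-$1$ codeword of $\Cpubp$ by projecting one of the rank-$2$ codewords guaranteed by Corollary~\ref{cor:rksgn}. Using the normalization $1\in F$ established just above the statement, I would take the two-dimensional subspace $F'$ of $F$ appearing in that corollary to be of the form $F'=\vsg{1,x}$ with $x\in F\setminus\Fq$. Then $\Cpub$ contains nonzero codewords $\cv=(c_j)_{1\le j\le n+t}$ whose support is contained in $F'$; indeed $\dim_{\Fq}\Cpub'\ge n/d$ by the corollary, so there is ample room. The claim will be that the image $\Matp{\cv}$ of any such codeword of genuine rank $2$ has rank exactly $1$.

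The core of the argument is a direct computation. Writing each entry as $c_j=a_j+b_jx$ with $a_j,b_j\in\Fq$, I would decompose $x=x_V+x_0$ along $\Fqm=V\oplus\Fq$, with $x_V\in V$ and $x_0\in\Fq$; since $x\notin\Fq$ we have $x_V\neq 0$. Let $\xv_V\in\Fq^{m-1}$ be the coordinate (column) vector of $x_V$ in the basis $\beta_1,\dots,\beta_{m-1}$ of $V$. Because $\beta_m=1$, the operation $\Matp{\cdot}$ discards exactly the $\langle\beta_m\rangle=\Fq$-part $a_j+b_jx_0$ of $c_j$ and retains only its $V$-part $b_j x_V$. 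Hence the $j$-th column of $\Matp{\cv}$ equals $b_j\,\xv_V$, so
\[
\Matp{\cv}=\xv_V\,(b_1,\dots,b_{n+t})
\]
is an outer product, and therefore $|\Matp{\cv}|\le 1$.

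It remains to check that $\Matp{\cv}$ is nonzero, and this is the only real obstacle: it amounts to exhibiting a codeword in $\Cpub'$ whose support is not contained in $\vsg{1}=\Fq$, i.e. with some $b_j\neq 0$. Any codeword of rank exactly $2$ (support equal to the full plane $F'$) does the job, since it cannot have all $b_j=0$. To guarantee such a codeword exists I would show $\Cpub'\not\subseteq\Fq^{n+t}$: the map $\cv\mapsto(b_j)_j$ is $\Fq$-linear on $\Cpub'$ with kernel $\Cpub'\cap\Fq^{n+t}$, so the obstruction is precisely that this kernel equals all of $\Cpub'$. A short union-of-subspaces argument controls the rest: if every element of $\Cpub'$ had rank $\le 1$, then $\Cpub'$ would be the union of the two proper $\Fq$-subspaces consisting respectively of the codewords proportional to a fixed $\lambda+x$ and of those lying in $\Fq^{n+t}$, forcing $\Cpub'$ to coincide with one of them; in the first case the outer product $\xv_V\,(b_j)_j$ is still nonzero, so only the degenerate possibility $\Cpub'\subseteq\Fq^{n+t}$ need be excluded. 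Since $\dim_{\Fq}\Cpub'\ge n/d$ comfortably exceeds $\dim_{\Fq}(\Cpub\cap\Fq^{n+t})$ for the RankSign parameters, this degenerate case does not arise, whence $\Matp{\cv}$ has rank exactly $1$ and Fact~\ref{fac:rw1} follows.
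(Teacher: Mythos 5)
Your proposal takes exactly the paper's route: the paper disposes of this fact in a single sentence, observing that the rank-$1$ codewords of $\Cpubp$ are precisely the projections $\Matp{\cv}$ of codewords $\cv \in \Cpub'$ attached to a two-dimensional subspace $F' \subseteq F$ containing $1$, which exist by Corollary~\ref{cor:rksgn} after the normalization $1 \in F$. Your outer-product computation $\Matp{\cv} = \xv_V(b_1,\dots,b_{n+t})$ correctly fleshes out what the paper leaves implicit, and your reduction of ``rank exactly $1$'' to excluding the degenerate case $\Cpub' \subseteq \Fq^{n+t}$, via the $\Fq$-linear map $\cv \mapsto (b_j)_j$, is also right.

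Two remarks on the material you added beyond the paper. First, the union-of-subspaces step is both unnecessary and wrong as stated: the rank-$\leq 1$ elements of $\Cpub'$ form the union of the $q+1$ subspaces $\{\cv \in \Cpub' : \Sup(\cv) \subseteq \vsg{\lambda + \mu x}\}$, one for each of the $q+1$ directions $(\lambda : \mu)$ of the projective line of $F'$, not of two subspaces; moreover an $\Fq$-vector space can perfectly well be covered by $q+1$ proper subspaces (think of $\Fq^2$ as the union of its lines), so no collapse onto a single member would follow anyway. Fortunately none of this is needed: any nonzero $\cv \in \Cpub'$ that does not lie in $\Fq^{n+t}$, whether of rank $1$ or $2$, has some $b_j \neq 0$ and therefore projects onto a nonzero rank-$1$ matrix, so your kernel observation alone already yields the correct reduction. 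Second, the exclusion of $\Cpub' \subseteq \Fq^{n+t}$ is asserted rather than proved: you would need $\dim_{\Fq}(\Cpub \cap \Fq^{n+t}) < n/d$, which is a genericity statement about the random matrix $\Rm$ (the subfield subcode of $\Cpub$ is typically trivial) rather than something that holds for every key. This is, however, the same level of rigor as the paper itself, which presents the fact as ``clear'' and never addresses the nonvanishing of the projection; so your write-up is sound in substance and, once the superfluous union argument is deleted, it matches the paper's argument with strictly more detail.
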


	\noindent These are just the codewords $\cv'$ which are of the form $\Matp{\cv}$ where $\cv \in \Cpub'$ with 
	$\Cpub'$ being defined from a subspace $F'$ of $F$ that contains $1$ (we can make this assumption since we can assume that $F$ contains 
	$1$).

	$\Cpubp$ has the structure of an $\Fq$-subspace of $\Fq^{(m-1)\times(n+t)}$. It is typically of dimension $(k+t)m$ 
	(i.e. the same as the $\Fq$ dimension of $\Cpub$). 
	Moreover once we have these rank weight $1$ codewords in $\Cpubp$ we can lift them to obtain rank weight $\leq 2$ 
	codewords in $\Cpub$ because for any $\cv \in \Cpub$ the last row of $\Mat{\cv}$ can be uniquely recovered from
	$\Matp{\cv}$ by performing linear combinations of the entries of $\Matp{\cv}$. We call this operation deducing $\cv$ from $\Matp{\cv}$ {\em lifting}
	from $\Cpubp$ to $\Cpub$.
	
	\subsection{Outline of the attack} \label{subsec:outlineAtt}
	Finding codewords of rank $1$ in $\Cpubp$ obviously reveals much of the secret LRPC structure. Lifting elements 
	in $\Cpubp$ that are of rank $1$ to $\Cpub$ as explained at the end of Subsection \ref{ss:w1pc} yields codewords of $\Cpub$ that have typically rank weight $2$. This can be used to reveal $F'$ and actually the whole subspace $F$ by finding enough rank $1$ codewords in $\Cpubp$. Once $F$ is recovered a suitable form for a parity-check matrix 
	of $\Cpub$ can be found that allows signing like a legitimate user. For the case of the parameters of RankSign 
	proposed 
	in \cite{GRSZ14a,AGHRZ17} 
	for which we always have $d=2$ we will proceed slightly differently here.
	Roughly speaking, our attack can be decomposed as follows
	\begin{enumerate}
		\item We find a particular element $\Mm$ in $\Cpubp$ of rank weight $1$  by solving a certain bilinear system with Gr\"obner bases techniques.
		\item\label{step:two} We lift $\Mm \in \Cpubp$ to $\cv \in \Cpub$ and compute $F' \eqdef \Sup(\cv)$.
		\item\label{step:three} We compute from $F'$ the $\Fq$-subspace \\ $\Cpub' \eqdef \left\{\cv=(c_i)_{1 \leq i \leq n+t} \in \Cpub: c_i \in F'\;\forall i \in \llbracket 1,n+t\rrbracket \right\}$. When $d=2$ this set has typically dimension $k$.
		\item We use this subspace  of $\Cpub$ to find a suitable parity-check matrix for $\Cpub$ which allows us 
		to sign like a legitimate user.
	\end{enumerate}
	Steps 2. and 3. are straightforward. We just give details for Steps 1. and 4. in what follows.
	
	\subsection{Finding rank $1$ matrices in $\Cpubp$ by solving a bilinear system}
	
	\noindent	
	{\bf The basic bilinear system.}
	Finding rank $1$ matrices in $\Cpubp$ can be formulated as an instance of the MinRank problem
	\cite{BFS99,C01}. We could use standard techniques for solving this problem \cite{KS99,FLP08,FSS10,S12} but 
	we found that it is better here to use the algebraic modelling suggested in \cite{AGHRZ17}. It basically consists in setting up an algebraic system with unknowns $\xv=(x_1,\dots,x_{m-1}) \in \Fq^{m-1}$ and $\yv \in \Fq^{n+t}$
	where the unknown matrix $\Mm$ in $\Cpubp$ that should be of rank $1$ has the form
	$$
	\Mm = \begin{pmatrix} x_1 y_1 & x_1 y_2 & \hdots & x_1 y_{n+t} \\
	x_2 y_1 & x_2 y_2 & \hdots & x_2 y_{n+t} \\
	\vdots & \vdots & \vdots  & \vdots \\
	x_{m-1} y_1 & x_{m-1} y_2 & \hdots & x_{m-1} y_{n+t}
	\end{pmatrix}.
	$$
	Recall that $\Cpubp$ has the structure of an $\Fq$ subspace of $\Fq^{(m-1)\times(n+t)}$ of dimension
	$(k+t)m$. By viewing the elements of $\Cpubp$ as vectors of $\Fq^{(m-1)(n+t)}$, i.e. the matrix $\Mm=(M_{ij})_{\substack{1 \leq i \leq m-1\\ 1 \leq j \leq n+t}}$ is viewed as the vector $\mv=(m_\ell)_{1 \leq \ell \leq (m-1)(n+t)}$ where
	$m_{(i-1)(n+t)+j} = M_{i,j}$, we can compute a parity-check matrix $\Hpubp$ for it. It is an 
	$((m-1)(n+t)-(k+t)m)\times (m-1)(n+t)$ matrix that we denote by $\Hpubp = (H^{\textup{proj}}_{ij})_{\substack{1 \leq i \leq (m-1)(n+t)-(k+t)m \\ 1 \leq j \leq (m-1)(n+t) }}$. This matrix gives $(m-1)(n+t)-(k+t)m$ bilinear equations that have
	to be satisfied by the $x_i$'s and the $y_j$'s:
	
	\begin{equation} \label{syst:eqRkSgn} 
	\left\{
	\begin{array}{l}
	\mathop{\sum}\limits_{j = 1}^{n+t} \mathop{\sum}\limits_{i=1}^{m-1} H^{\textup{proj}}_{ 1,(i-1)(n+t) + j} x_i y_{j} = 0 \\
	\qquad\vdots \\ 
	\mathop{\sum}\limits_{j = 1}^{n+t}\mathop{\sum}\limits_{i=1}^{m-1} H^{\textup{proj}}_{ (n+t)(m-1) - (k+t)m,(i-1)(n+t) + j} x_{i}y_{j} = 0 \\
	\end{array}
	\right.
	\end{equation}  
	
	\noindent {\bf Restricting the number of solutions.} 
	We have solved the bilinear system \eqref{syst:eqRkSgn} with standard Gr\"obner bases techniques that are implemented in Magma.
	To speed-up the resolution of the bilinear system with Gr\"obner bases techniques 
	(especially the change of order that is performed after a first computation of a Gr\"obner basis for a suitable order
	to deduce a basis for the lexicographic order which is more suited for outputting a solution) it is helpful 
	to use additional equations that restrict the solution space which is otherwise really huge in this case.
	The purpose of the following discussion is to show where these solutions come from and how to restrict them. 	
	By bilinearity of System \eqref{syst:eqRkSgn} we may fix 
	\begin{equation} \label{eq:rkSgnGrob1} 
	x_1=1
	\end{equation} 
	when there is a solution $\xv$ such that $x_1 \neq 0$).
	Furthermore,  the fact that $\Cpub'$ is an $\Fq$ vector space of dimension $n/d$ induces 
	that for a given $\xv$ solution to  \eqref{syst:eqRkSgn} the set of corresponding $\yv$'s also forms 
	a vector space of dimension $n/d$. We may therefore rather safely assume that we can choose 
	\begin{equation} \label{eq:rkSgnGrob2}
	\forall i \in \llbracket 1,\frac{n}{d} - 1 \rrbracket, \mbox{ } y_{i} = 0 \quad \mbox{and} \quad y_{n/d} = 1. 
	\end{equation} 
	There is an additional degree of freedom on $\xv$ coming from the fact that even if $d=2$ there are several 
	spaces $\alpha F$ for which $1 \in \alpha F$. To verify this, let us study in more detail the case when 
	$F$ is of dimension $2$, say
	$$
	F = \vsg{a,b}.
	$$
	We wish to understand what are the possible values for $z \in \Fqm$ such that there exists $c \neq 0$ 
	for which $$
	\vsg{a,b} = c \vsg{1,z}.$$
	The possible values for $\xv$ will then be the projection of those $z$ to the $\Fq$ space $\vsg{\beta_1,\dots,\beta_{m-1}}$. The possible values for $z$ are then obtained from studying the possible values for $c$. There are two 
	cases to consider:
	
	\begin{itemize} 
		\item {\bf Case 1:} $c = \frac{\mu}{a + b \nu}$ for $\mu \in \Fq^\times$ and $\nu \in \Fq$. In such a case $$z = \frac{\beta b}{a + b \nu} + \delta$$ for $\beta \in \Fq^\times$, $\delta \in \Fq$.
		\item {\bf Case 2:} $c = \frac{\mu}{b}$ for $\mu \in \Fq^\times$. Here
		$$
		z = \alpha \frac{a}{b} + \delta
		$$
		for $\alpha \in \Fq^\times$, $\delta \in \Fq$.
	\end{itemize} 
	Since the $\delta$ term vanishes after projecting $x$ onto $\vsg{\beta_1,\dots,\beta_{m-1}}$ we have essentially two degrees of freedom over $\Fq$ for $x$. One has already been taken into account when setting $x_1=1$. We can
	add a second one $x_2 = \alpha$ where $\alpha$ is arbitrary in $\Fq$. We have actually chosen in our experiments that 
	\begin{equation} \label{eq:rkSgnGrob3}
	(x_2-\alpha)(x_2 - \beta)=0 
	\end{equation} 
	for some random $\alpha$ and $\beta$ in $\Fq$. This has resulted
	in some gain in the computation of the solution space. Finally the following proposition summarizes the system we have solved.

	\begin{proposition} By eliminating variables using Equations \eqref{eq:rkSgnGrob1},\eqref{eq:rkSgnGrob2} and \eqref{eq:rkSgnGrob3} in \eqref{syst:eqRkSgn} we have
		\begin{itemize}
			\item $	nm - k(m+1) - t + 2$ equations;

			\item $m-1 + n+t$ unknowns.
		\end{itemize} 
		
	\end{proposition}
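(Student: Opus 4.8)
The plan is to prove this by a direct accounting of the variables and the equations of the combined system that is actually handed to the Gröbner basis engine, using nothing beyond the RankSign parameter constraint \eqref{eq:problem}. The unknowns are fixed once and for all as the coordinates $\xv=(x_1,\dots,x_{m-1})$ and $\yv=(y_1,\dots,y_{n+t})$, so there are $m-1+n+t$ of them; this immediately settles the second item, and the whole content of the proposition lies in the equation count.

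First I would recall that, since $\Cpubp$ is an $\Fq$-subspace of $\Fq^{(m-1)\times(n+t)}$ of dimension $(k+t)m$, its parity-check matrix $\Hpubp$ viewed inside $\Fq^{(m-1)(n+t)}$ has exactly $(m-1)(n+t)-(k+t)m$ rows, so the bilinear system \eqref{syst:eqRkSgn} contributes that many equations. To these I adjoin the constraint equations: \eqref{eq:rkSgnGrob1} supplies the single equation $x_1=1$; the specialization \eqref{eq:rkSgnGrob2} supplies the $n/d-1$ vanishing conditions $y_1=0,\dots,y_{n/d-1}=0$ together with the normalization $y_{n/d}=1$, hence $n/d$ equations in total; and \eqref{eq:rkSgnGrob3} supplies the single quadratic equation $(x_2-\alpha)(x_2-\beta)=0$. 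Adding these up, the system has
$$ (m-1)(n+t)-(k+t)m+\frac{n}{d}+2 $$
equations.

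The one non-routine ingredient is the RankSign parameter constraint \eqref{eq:problem}, $n=(n-k)d$, which I would invoke in the form $n/d=n-k$. Expanding $(m-1)(n+t)-(k+t)m = mn+mt-n-t-km-tm = mn-km-n-t$ (the $mt$ and $tm$ terms cancel), the count becomes $mn-km-n-t+(n-k)+2$; the two occurrences of $n$ cancel, leaving $mn-km-k-t+2 = nm-k(m+1)-t+2$, as claimed.

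I expect no real obstacle: the statement is an exact bookkeeping identity whose only substantive step is the substitution $n/d=n-k$. The two points that warrant care are (i) reading off from \eqref{eq:rkSgnGrob2} that it imposes exactly $n/d$ scalar equations — the $n/d-1$ vanishing conditions \emph{plus} the normalization — rather than $n/d-1$, and (ii) tracking that the $mt$ term arising from $(m-1)(n+t)$ is cancelled precisely by the $tm$ term from $(k+t)m$, so that $t$ survives only linearly in the final expression. Everything else is elementary algebra, and the upshot worth emphasizing is that the resulting system, with $\Theta(mn)$ equations against only $m-1+n+t$ unknowns, is largely overdetermined, which is exactly what makes the Gröbner basis resolution efficient.
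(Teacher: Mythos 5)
Your proof is correct: the paper states this proposition without any proof, and your accounting — keeping all $m-1+n+t$ unknowns, adjoining to the $(m-1)(n+t)-(k+t)m$ bilinear equations of \eqref{syst:eqRkSgn} the $1+\frac{n}{d}+1$ constraint equations from \eqref{eq:rkSgnGrob1}, \eqref{eq:rkSgnGrob2}, \eqref{eq:rkSgnGrob3}, and substituting $\frac{n}{d}=n-k$ via \eqref{eq:problem} — is exactly the computation that yields the stated numbers $nm-k(m+1)-t+2$ and $m-1+n+t$. You are also right to flag that \eqref{eq:rkSgnGrob2} contributes $\frac{n}{d}$ equations (not $\frac{n}{d}-1$) and that the proposition's numbers correspond to counting the constraints as equations rather than literally removing unknowns, which is the only reading consistent with the claimed unknown count.
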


	In the ``typical regime'' where $m \approx n$, $k \approx \frac{n}{2}$ and $t \ll n$ 
	we have a number of equations of order $n^2$ and a number of unknowns of order $n$, therefore 
	typically the regime where we expect that the Gr\"obner basis techniques take polynomial time.

	\subsection{Numerical results}

	We give in Table \ref{table:numRes} our numerical results to find a codeword of rank $2$ in any public code of the RankSign scheme for parameters chosen according to \cite{AGHRZ17}. These results have been obtained with an Intel Core i5 processor, clocked at $1.6$ GHz using a single core, with $8$ Go of RAM.

	\begin{table} 
		\begin{center} 
			\begin{tabular}{|c|c|c|c|}
				\hline
				Intended Security \cite{AGHRZ17} & $(n,k,m,d,t,q)$ & Time & Maximum Memory Usage \\
				\hline		
				128 bits & $(20,10,21,2,2,2^{32})$ & $20.12$ s & 49 MB \\
				128 bits& $(24,12,24,2,2,2^{24})$ & $31.75$ s & 65 MB \\
				192 bits& $(24,12,27,2,3,2^{32})$ & $125.64$ s & 97 MB \\
				256 bits & $(28,14,30,2,3,2^{32})$ & $256.90$ s & 137 MB \\
				\hline 			
			\end{tabular} 
		\end{center} 
		\caption{Attack on NIST's parameters of RankSign \label{table:numRes} }
	\end{table}

	\subsection{Finishing the attack}\label{ss:finishing}

	\noindent We present in this subsection the end of our attack which consists in being able to sign with only the knowledge of the public key. It holds for the parameters chosen for the NIST competition \cite{AGHRZ17} for which 
	$d=2$. Observe that \eqref{eq:problem} implies that we have $k = n-k=n/2$.
	
	We have at that point obtained the code $\Cpub'$ (see \S\ref{subsec:outlineAtt}, Point 3.) that has dimension (over $\Fq$) $\geq n/d=n/2=k$.	
	This code is just $\Fq$-linear, but it will be convenient to extend it by considering its 
	$\Fqm$-linear extension, that we denote $\Fqm \otimes \Cpub'$ that is defined by 
	the $\Fqm$-linear subspace of $\Fqm^{n+t}$ obtained from linear combinations over $\Fqm$ of codewords in 
	$\Cpub'$. In other words if we denote by $\{\cv'_1,\dots,\cv'_{k'}\}$ an $\Fq$-basis 
	of $\Cpub'$, then
	$$
	\Fqm \otimes \Cpub' = \vsgm{\cv'_1,\dots,\cv'_{k'}}.
	$$
	
	To simplify the discussion we make now  the following assumption (which was 
	corroborated by our experiments)
	\newline

	\begin{ass} \label{ass:1} 
		$$
		\dim \Fqm \otimes \cC_{\textup{pub}}' = k.
		$$
	\end{ass}

	\noindent The rationale behind this assumption is that (i) the dimension of $\Cpub'$ is very likely to be $n/d$ which is equal to $k$ and
	(ii) an $\Fq$ basis of $\Cpub'$ is very likely to be an $\Fqm$ basis too.

	\begin{lemma}
		Under Assumption \ref{ass:1} the code $\left(\Fqm \otimes \cC_{\textup{pub}}'\right)^{\bot}$ has length $n+t$, dimension $n+t-k$ and is an \textup{LRPC}-code that is associated to a homogeneous matrix that has all its entries in an $\Fq$ subspace $F$ of $\Fqm$ of dimension $2$ which contains $1$. Furthermore, the sets 
		$$
		\cD \eqdef \{ \cv \in \left(\Fqm \otimes \cC_{\textup{pub}}'\right)^{\bot} : \Sp(\cv) \subseteq \Fq \}
		\mbox{ and }   \cD' \eqdef \{ \cv \in \left(\Fqm \otimes \cC_{\textup{pub}}'\right)^{\bot} : \Sp(\cv) \subseteq F \} 		$$
		are $\Fq$-subspaces of dimension  $\geq t$
		and $\geq n-k+2t$ respectively.

	\end{lemma}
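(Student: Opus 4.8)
The plan is to make everything explicit in terms of the secret data $\Hm,\Rm,\Pm$ and to read off all four assertions from a single homogeneous generator matrix of $\left(\Fqm\otimes\Cpub'\right)^{\bot}$. First I would relate $\Cpub'$ back to the underlying LRPC code $\Clrpc$ (the code with parity-check matrix $\Hm$, of $\Fqm$-dimension $k$). Exactly as in the proof of Corollary~\ref{cor:rksgn}, every codeword of $\Cpub'$ has the form $(\clrpc,\mathbf{0}_{t})\transp{(\Pm^{-1})}$ with $\clrpc$ ranging over $\Clrpc' \eqdef \{\clrpc\in\Clrpc : \Sp(\clrpc)\subseteq F\}$ (here $F'=F$, since $d=2$ forces $\dim F'=2=\dim F$ and $F'\subseteq F$). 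The map $\clrpc\mapsto(\clrpc,\mathbf{0}_{t})\transp{(\Pm^{-1})}$ is an $\Fqm$-linear injection, so it carries $\Fqm\otimes\Clrpc'$ isomorphically onto $\Fqm\otimes\Cpub'$. Assumption~\ref{ass:1} then gives $\dim_{\Fqm}\left(\Fqm\otimes\Clrpc'\right)=k$; since $\Fqm\otimes\Clrpc'\subseteq\Clrpc$ and $\dim_{\Fqm}\Clrpc=k$, I conclude $\Fqm\otimes\Clrpc'=\Clrpc$, hence $\Fqm\otimes\Cpub'=\{(\clrpc,\mathbf{0}_{t})\transp{(\Pm^{-1})} : \clrpc\in\Clrpc\}$. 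This yields length $n+t$ and, by duality, dimension $n+t-k$.

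Next I would compute the dual explicitly. A vector $\yv$ lies in $\left(\Fqm\otimes\Cpub'\right)^{\bot}$ iff, writing $\yv\Pm^{-1}=(\av,\rv)$ with $\av\in\Fqm^{n}$ and $\rv\in\Fqm^{t}$, one has $\av\transp{\clrpc}=0$ for all $\clrpc\in\Clrpc$, i.e.\ $\av\in\Clrpc^{\bot}$ with $\rv$ free. Since $\Clrpc^{\bot}$ is generated by the homogeneous matrix $\Hm$, this shows that $\left(\Fqm\otimes\Cpub'\right)^{\bot}$ is generated by the $(n-k+t)\times(n+t)$ matrix
$$
\Gm_0 \eqdef \begin{bmatrix} \Hm & \zero \\ \zero & \Imat \end{bmatrix}\Pm .
$$
Because $\Pm$ has entries in $\Fq$ and $1\in F$, every entry of $\Gm_0$ is an $\Fq$-combination of entries of $\Hm$ (which span $F$) and of entries of $\Imat$ (which lie in $\Fq\subseteq F$); hence $\Gm_0$ is homogeneous of weight $2$ with all entries in $F$, which is the asserted LRPC structure.

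The two dimension bounds are then obtained by exhibiting explicit $\Fq$-subspaces of codewords of $\Gm_0$. For $\cD$ I take $\av=\zero$ and let $\rv$ range over $\Fq^{t}$: the codewords $(\zero,\rv)\Pm$ have support contained in $\Fq$, and as $\rv\mapsto(\zero,\rv)\Pm$ is an $\Fq$-linear injection this furnishes an $\Fq$-subspace of $\cD$ of dimension $t$. For $\cD'$ I take $\av=\uv\Hm$ with $\uv\in\Fq^{n-k}$ (so every entry of $\av$ lies in $F$) together with $\rv\in F^{t}$: the corresponding codewords $(\uv\Hm,\rv)\Pm$ have support contained in $F$ (using that $\Pm$ is a rank isometry, $\Sup(\xv\Pm)=\Sup(\xv)$), and the map $(\uv,\rv)\mapsto(\uv\Hm,\rv)\Pm$ is $\Fq$-linear and injective because $\Hm$ has full row rank and $\Pm$ is invertible. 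Its image is an $\Fq$-subspace of $\cD'$ of dimension $(n-k)+2t$. These give $\dim_{\Fq}\cD\ge t$ and $\dim_{\Fq}\cD'\ge n-k+2t$.

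The step requiring the most care is the identification $\Fqm\otimes\Clrpc'=\Clrpc$: this is precisely where Assumption~\ref{ass:1} enters, and it is what guarantees that dualizing $\Cpub'$ recovers the \emph{full} secret matrix $\Hm$ (up to the $\Fq$-isometry $\Pm$) rather than a proper subcode, so that $\Gm_0$ genuinely has homogeneous weight $2$. The remaining subtlety is bookkeeping — tracking the padding by $t$ zeros and the action of $\Pm$ — together with the (harmless) abuse of calling $\left(\Fqm\otimes\Cpub'\right)^{\bot}$ an ``LRPC code'': a dimension count shows $\Gm_0$ is a \emph{generator} matrix of this code (its kernel would have dimension $k\neq n+t-k$), so what is meant is that $\left(\Fqm\otimes\Cpub'\right)^{\bot}$ is \emph{generated} by the homogeneous matrix $\Gm_0$, which is exactly the trapdoor data the attack seeks to recover.
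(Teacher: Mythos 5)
Your entire argument rests on the identification $\Cpub' = \left\{(\clrpc,\mathbf{0}_{t})\transp{(\Pm^{-1})} : \clrpc\in\Clrpc'\right\}$, which you claim holds ``exactly as in the proof of Corollary \ref{cor:rksgn}''. This is a genuine gap. That corollary proves only one inclusion, namely $\left\{(\clrpc,\mathbf{0}_{t})\transp{(\Pm^{-1})} : \clrpc\in\Clrpc'\right\}\subseteq\Cpub'$; the reverse inclusion is precisely the claim that any $\cv\in\Cpub$ with all entries in $F'$ has $\bv=\mathbf{0}$, where $(\av,\bv)\eqdef \cv\transp{\Pm}$. Membership in $\Cpub$ only gives $\Hm\transp{\av}+\Rm\transp{\bv}=\mathbf{0}$, and nothing a priori excludes a solution with $\bv\neq\mathbf{0}$: ruling it out is a genericity statement about the random matrix $\Rm$ (that $\Rm\transp{\bv}$ landing in $(F\cdot F')^{n-k}$ with $\bv\in(F')^{t}$ forces $\bv=\mathbf{0}$), which you neither prove nor assume. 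Nor does Assumption \ref{ass:1} supply it: it pins down $\dim_{\Fqm}\left(\Fqm\otimes\Cpub'\right)$ but cannot force $\Cpub'$ to lie inside the image of $\Clrpc$ (one can consistently imagine $\Clrpc'$ spanning an $\Fqm$-space of dimension $<k$ while $\Cpub'$ contains an extra vector outside the image, with total $\Fqm$-dimension still $k$). Since your explicit generator matrix $\Gm_0$ of the dual, the homogeneity claim, and both families of codewords used to bound $\cD$ and $\cD'$ all presuppose this identification, the gap propagates through every assertion of the lemma.

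The paper's proof shows that no such identification --- indeed no reference to the secret data $\Hm,\Rm,\Pm$ at all --- is needed. Since $\Fqm\otimes\Cpub'$ is by definition spanned over $\Fqm$ by $\Cpub'$, one may choose a generator matrix of $\Fqm\otimes\Cpub'$ whose $k$ rows lie in $\Cpub'$; its entries lie in $F'$, so it is homogeneous of weight $2$ with entries in a space containing $1$, and, being a generator matrix of the primal code, it is a parity-check matrix of $\left(\Fqm\otimes\Cpub'\right)^{\bot}$. This already makes the dual an LRPC code in the strict sense of the paper's definition; your closing remark that the statement must be read as ``generated by a homogeneous matrix'' is therefore off target --- the witnessing homogeneous matrix is not your $\Gm_0$ but this $k\times(n+t)$ matrix. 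The two dimension bounds then follow by applying Lemma \ref{lem:pcp} to the dual code equipped with this parity-check matrix, once with the subspace $\Fq$ (giving $(n+t)\cdot 1-k\cdot 2=t$, using $n=2k$) and once with $F$ itself (giving $2(n+t)-3k=n-k+2t$, using $\dim(F\cdot F)\leq 3$ since $1\in F$). If you wish to keep your route, you must state the reverse inclusion as an explicit additional (generic) assumption; granting it, your explicit codewords $(\zero,\rv)\Pm$ and $(\uv\Hm,\rv)\Pm$ do yield correct proofs of the bounds.
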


	\begin{proof} By Assumption \ref{ass:1}, $\Fqm \otimes \Cpub'$ is of dimension $k$ and its dual
		$\left(\Fqm \otimes \cC_{\textup{pub}}'\right)^{\bot}$ has therefore dimension $n+t-k$.
		There is a generator matrix for $\Fqm \otimes \Cpub'$ that is formed by rows taken from $\Cpub'$.
		It is homogeneous of weight $2$. Say that its entries generate a space $F$.
		This is also a parity-check matrix of the dual code. $\left(\Fqm \otimes \cC_{\textup{pub}}'\right)^{\bot}$
		is therefore an LRPC code of weight $2$.
		
		By applying now Lemma \ref{lem:pcp} to it with $F' = \Fq$, we have
		\begin{align*} 
		(n+t) \dim_{\Fq}(\Fq) - (n+t - (n+t-k)) \dim(F \cdot \Fq) &= n + t - 2k \\
		&= t \mbox{ (because } 2k = n)
		\end{align*}  
		which gives the result for the set $\cD$. We apply once again Lemma \ref{lem:pcp} but this time with $F'=F$.
		Say $F  = \langle 1,x_{1} \rangle_{\Fq}$. This  gives a lower bound on the dimension of $\cD'$ which is
		\begin{align*} 
		(n+t) \dim (F) - (n+t - (n+t-k))\dim(F\cdot F) &\geq 2(n+t) - 3k \\
		&\mbox{ (because } F\cdot F = \langle 1,x_{1},x_{1}^{2} \rangle_{\Fq}) \\ 
		&= n-k + 2t \mbox{ (because }2k = n).
		\end{align*}

	\end{proof}

	\noindent To end our attack we make now the following assumption that was again corroborated in our experiments.
	\newline

	\begin{ass} \label{ass:2} We can extract from sets $\cD$ and $\cD'$ a basis of $\left(\Fqm \otimes \cC_{\textup{pub}}'\right)^{\bot}$ with 
		\begin{enumerate}
			\item $t$ codewords of support $\Fq$,

			\item $n-k$ codewords of a same support of rank $2$ which contains $1$.
		\end{enumerate}
	\end{ass}

	\begin{restatable}{lemma}{lemfinishing}
		\label{lem:finishing} Under Assumptions \ref{ass:1} and \ref{ass:2} there exists a parity-check matrix $\Hm' \in \Fqm^{(n+t-k)\times (n+t)}$ of $\Fqm \otimes \cC_{\textup{pub}}'$, an invertible matrix $\Pm$ of size $n+t$ 
		with entries in the small field $\Fq$ and an invertible matrix $\Sm$ of size $n+t-k$ with entries  in $\Fqm$ such that
		$$
		\Sm \Hm' \Pm = 
		\begin{pmatrix}
		I_{t} & \mathbf{0} \\
		\mathbf{0} & \Rm 
		\end{pmatrix}
		$$
		where $\Rm$ is homogeneous of degree $2$ and of size $(n-k) \times n$.

	\end{restatable}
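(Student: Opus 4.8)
The plan is to build $\Hm'$ directly from the basis supplied by Assumption~\ref{ass:2} and then to normalise it by one column transformation over $\Fq$ and one row transformation over $\Fqm$. Write $\fv_1,\dots,\fv_t$ for the $t$ basis vectors of $\left(\Fqm \otimes \Cpub'\right)^{\bot}$ with $\Sp(\fv_i)\subseteq\Fq$ and $\ev_1,\dots,\ev_{n-k}$ for the $n-k$ basis vectors whose common support is the rank-$2$ space $F$ (with $1\in F$), and let $\Hm'$ be the matrix whose rows are $\fv_1,\dots,\fv_t,\ev_1,\dots,\ev_{n-k}$. Since these $n+t-k$ vectors form a basis of $\left(\Fqm \otimes \Cpub'\right)^{\bot}$, the matrix $\Hm'$ is indeed a parity-check matrix of $\Fqm \otimes \Cpub'$.

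For the column transformation, note that $\fv_1,\dots,\fv_t$ have all their entries in $\Fq$ and, being part of a basis, are $\Fqm$-linearly independent; a family of vectors over $\Fq$ is $\Fqm$-independent if and only if it is $\Fq$-independent, so they are $\Fq$-independent. Hence there is an invertible $\Pm\in\Fq^{(n+t)\times(n+t)}$ for which the first $t$ rows of $\Hm'\Pm$ are exactly $\begin{pmatrix} I_t & \mathbf{0}\end{pmatrix}$. Because $\Pm$ has entries in $\Fq$ and $F$ is an $\Fq$-subspace, every $\ev_i\Pm$ still has all its entries in $F$; writing $\ev_i\Pm = \begin{pmatrix}\av_i & \bv_i\end{pmatrix}$ with $\av_i\in F^{t}$ and $\bv_i\in F^{n}$, and stacking the $\av_i$ into $\Am\in\Fqm^{(n-k)\times t}$ and the $\bv_i$ into $\Rm\in\Fqm^{(n-k)\times n}$, we obtain
$$
\Hm'\Pm = \begin{pmatrix} I_t & \mathbf{0}\\ \Am & \Rm\end{pmatrix}.
$$
The row transformation is then immediate: set $\Sm = \begin{pmatrix} I_t & \mathbf{0}\\ -\Am & I_{n-k}\end{pmatrix}$, which lies in $\Fqm^{(n+t-k)\times(n+t-k)}$ and is invertible, so that $\Sm\Hm'\Pm = \begin{pmatrix} I_t & \mathbf{0}\\ \mathbf{0} & \Rm\end{pmatrix}$, which is the announced shape.

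It remains to check that $\Rm$ is homogeneous of weight exactly $2$, and this is the delicate point. The upper bound is free: all entries of $\Rm$ lie in $F$, so they span an $\Fq$-space of dimension at most $2$; and $\Rm$ cannot vanish, since $\Sm\Hm'\Pm$ has full rank $n+t-k$ while its top block contributes only $t$, forcing $\Rm$ to have $\Fqm$-rank $n-k$. The only possibility to exclude is that the entries of $\Rm$ all lie in a single line $\vsg{\gamma}$ with $\gamma\in\Fqm^{\times}$. Were this the case, multiplying the last $n-k$ rows of $\Sm\Hm'\Pm$ by $\gamma^{-1}$ would turn them into vectors of $\Fq^{n+t}$; together with the rows $\begin{pmatrix} I_t & \mathbf{0}\end{pmatrix}$ this would exhibit a basis of $\left(\Fqm \otimes \Cpub'\right)^{\bot}\Pm$ made entirely of vectors over $\Fq$, so that $\left(\Fqm \otimes \Cpub'\right)^{\bot}$, and hence $\Fqm \otimes \Cpub'$ itself, would be defined over $\Fq$. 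This in turn would force the space $\cD$ of support-$\Fq$ codewords to exhaust the whole dual, i.e. to have dimension $n+t-k$, contradicting that Assumption~\ref{ass:2} genuinely requires $n-k$ basis codewords of rank $2$, and amounting to the secret support $F$ collapsing to $\Fq$, against the fact that the homogeneous matrix of the public code has weight $d=2$ with support not contained in $\Fq$. I expect this exclusion to be the main obstacle: it is the one place where one genuinely uses that the recovered support is truly two-dimensional, and it is precisely the behaviour corroborated by the experiments underlying Assumptions~\ref{ass:1} and~\ref{ass:2}.
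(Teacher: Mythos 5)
Your constructive part coincides, in substance, with the paper's own proof: both stack the basis supplied by Assumption~\ref{ass:2} into $\Hm'$ and normalise it by one transformation over $\Fq$ on the columns and one over $\Fqm$ on the rows. The paper performs the row reduction first, obtaining $\Sm\Hm'=\begin{pmatrix} I_t & \Qm\\ \mathbf{0} & \Rm\end{pmatrix}$ with $\Qm$ over $\Fq$, and then clears $\Qm$ by a column transformation over $\Fq$; you do the column normalisation first and then a single block-row operation. The two orders are interchangeable, and your bookkeeping (the entries of $\ev_i\Pm$ stay in $F$ because $\Pm$ is over $\Fq$ and $F$ is an $\Fq$-space; $\Sm$ is invertible; $\Rm\neq\mathbf{0}$ by the rank count) is correct.

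The genuine gap is in the step you yourself flag as delicate: ruling out that the entries of $\Rm$ span a single line $\vsg{\gamma}$. Your contradiction does not follow from the stated hypotheses. Assumption~\ref{ass:2} only asserts that \emph{some} basis of the prescribed shape can be extracted from $\cD$ and $\cD'$; it is perfectly compatible with $\left(\Fqm\otimes\Cpub'\right)^{\bot}$ being defined over $\Fq$. Indeed, suppose the dual equals $\Fqm\otimes D_0$ with $D_0$ an $[n+t,\,n+t-k]$ code over $\Fq$: take an $\Fq$-basis $d_1,\dots,d_{n+t-k}$ of $D_0$, pick $x\in F\setminus\Fq$, and set $e_i\eqdef d_{t+i}+x\,d_1$. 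Each $e_i$ has support $\vsg{1,x}=F$, of rank $2$ and containing $1$, and $d_1,\dots,d_t,e_1,\dots,e_{n-k}$ is still an $\Fqm$-basis, so Assumption~\ref{ass:2} holds in this configuration; moreover the preceding lemma gives only \emph{lower} bounds on $\dim\cD$ and $\dim\cD'$, so $\dim_{\Fq}\cD=n+t-k$ contradicts nothing. Your appeal to the secret matrix is also a non sequitur: $\Fqm\otimes\Cpub'$ is a subcode of $\Cpub$, and its dual being defined over $\Fq$ implies nothing about the support of the secret homogeneous matrix (a weight-$2$ homogeneous matrix can easily have all its rows inside a code defined over $\Fq$). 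Worse, if you run your construction on the basis above, the tails $\bv_i$ are exactly the last $n$ coordinates of $d_{t+i}\Pm$ (the $x$-part lives only in the first $t$ columns), so your $\Rm$ comes out $\Fq$-valued, i.e.\ homogeneous of weight $1$: the exact-weight-$2$ conclusion is genuinely not a consequence of Assumptions~\ref{ass:1} and~\ref{ass:2} for an arbitrary admissible basis. For what it is worth, the paper does not prove this point either; it simply asserts that the eliminated bottom block is homogeneous of weight $2$, implicitly treating this, like the assumptions themselves, as a genericity property corroborated by the experiments. The honest fixes are to claim only weight $\leq 2$ together with $\Rm\neq\mathbf{0}$ (which your rank argument does give) and invoke genericity for the exact weight, or to strengthen Assumption~\ref{ass:2}; the purported contradiction cannot be repaired.
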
 
	
	\begin{proof}

		Under Assumptions \ref{ass:1} and \ref{ass:2} there is a generator matrix of $\left(\Fqm \otimes \cC_{\textup{pub}}'\right)^{\bot}$ and thus a parity-check matrix of $\Fqm \otimes \cC_{\textup{pub}}'$ which is homogeneous of degree $2$ with the particularity that $t$ rows of it are of rank $1$. Let $\Hm'$ be such a matrix, thus by making a Gaussian elimination on its rows we have an invertible matrix $\Sm \in \F_{q^m}^{(n+t-k)\times (n+t-k)}$ such that:
		$$
		\Sm\Hm' = \begin{pmatrix}
		I_{t} & \Qm \\
		\mathbf{0} & \Rm 
		\end{pmatrix}
		$$ 
		where $\Qm$ is a matrix of size $t\times (n+t)$ whose entries lie in the small field $\Fq$
		and $\Rm$ is a homogeneous matrix of weight $2$ and of size $(n-k)\times n$. In this way there exists an invertible matrix $\Pm$ of size $(n+t)\times (n+t)$ with coefficients in the field $\Fq$ such that 
		$$
		\Sm \Hm' \Pm = \begin{pmatrix}
		I_{t} & \mathbf{0} \\
		\mathbf{0} & \Rm 
		\end{pmatrix}
		$$
		which concludes the proof. 
	\end{proof}

	\noindent The idea now to sign as a legitimate user will be to use the matrix $\Rm$ and the decoder of Fact 
	\ref{fa:LRPCdecoder} (see Section \S\ref{sec:rkSgn}).
	Recall that to make a signature for the matrix $\Hpub$ (which defines the public code $\cC_{\textup{pub}}$) and a message $\mv$, we look for an error $\ev$ of rank $w$ satisfying
	$n-k  = d(w-t-t')$ (see  Equation \eqref{eq:w} of Fact \ref{fa:LRPCdecoder}),
	such that $\Hpub\transp{\ev} = \transp{\sv}$ with $\sv = \cH(\mv)$ (the hash of the message). The algorithm that follows performs this task:
	\bigskip

	\framebox{\begin{minipage}{0.9\textwidth}
			
			\begin{enumerate}
				\item We compute $\yv \in \Fqm^{n+t}$ such that $\Hpub\transpose{\yv} = \transpose{\sv}$.

				\item Let $\yv' = \yv \transpose{(\Pm^{-1})}$ and we compute $\sv' = (\Sm\Hm'\Pm)\transpose{\yv'}$.

				\item Let $\sv_{1}'$ be the first $t$ coordinates of $\sv'$, $\sv'_{2}$ its last $n-k$ ones. We apply the decoder  of \S\ref{sec:rkSgn} with:

				\qquad - The subspace $T \eqdef \Sp(\sv_{1}') + T'$ where $T'$ is a random subspace of $\Fqm$ 
				of dimension $t'$.

				\qquad - The parity-check matrix $\Rm$ and the syndrome $\sv_{2}'$.

				\noindent Then we get a vector $\ev'$ such that $T \subseteq \Sp(\ev')$ and $\Rm\transpose{\ev'} = \transpose{\sv_{2}'}$.

				\item We compute $\ev = (\sv_{1}',\ev')\transpose{\Pm}$.
			\end{enumerate}
		\end{minipage}
	}$$\mbox{} $$
	
	Let us now show  the correctness of this algorithm, in other words we show that $\Hpub\transp{\ev} = \transp{\sv}$ with $|\ev| = w$ satisfying $n-k  = 2(w-t-t')$.
	\begin{proof}[Proof of Correctness]
		First we have:
		\begin{align*}
		\transpose{\Hm'\ev} &= \Hm'\Pm\transpose{(\sv_{1}',\ev')} \\
		&= \Sm^{-1} (\Sm\Hm'\Pm)\transpose{(\sv_{1}',\ev')} \\
		&= \Sm^{-1} \begin{pmatrix}
		I_{t} & \mathbf{0} \\
		\mathbf{0} & \Rm 
		\end{pmatrix} \transpose{(\sv_{1}',\ev')} \\
		&= \Sm^{-1} \begin{pmatrix} \transpose{\sv_{1}'} \\
		\Rm \transpose{\ev'} 
		\end{pmatrix} \mbox{ (because } \sv_{1}' \mbox{ of size } t)\\
		&= \Sm^{-1} \transpose{\sv'} \mbox{ (because } \Rm\transpose{\ev'} = \transpose{\sv_{2}'}) \\
		&= \Sm^{-1} (\Sm \Hm'\Pm) \transpose{\yv'} \\
		&= \Hm' \Pm (\Pm^{-1}\transpose{\yv}) \\
		&= \Hm'\transpose{\yv}\\
		\end{align*}
		which implies that $\Hm'\transpose{(\ev - \yv)} = \mathbf{0}$ and $\yv - \ev \in \Fqm \otimes \cC_{\textup{pub}}'$. Recall now that $\Fqm \otimes \cC_{\textup{pub}}' \subseteq \cC_{\textup{pub}}$ and therefore $\Hpub\transpose{(\ev - \yv)} = \mathbf{0}$. By  linearity we get $\Hpub\transpose{\ev} = \Hpub\transpose{\yv} = \transpose{\sv}$.

		\noindent Thus under the condition that the decoder in Point 3 works for the matrix $\Hm'$, the syndrome $\sv$ and the subspace $T$,  our algorithm decodes the syndrome $\sv$ relatively to $\Hpub$.

		\noindent The parity-check matrix $\Rm$ is homogeneous of degree $2$, has $n-k$ rows and $n$ columns.
		We can therefore apply to it the decoder of \S\ref{sec:rkSgn}. 
		It will output (we use here Fact \ref{fa:LRPCdecoder}) an error $\ev'$ of weight $w'$ that satisfies
		$n-k=2(w'-t-t')$. Note that this implies that $w'=w$ which is the error weight we want to achieve.
		Then the error $\ev = (\sv_{1}',\ev')\transpose{\Pm}$ has the same rank as $\Sp(\sv_{1}') \subseteq T \subseteq \Sp(\ev')$ and $\Pm$ is an invertible matrix in the small field which concludes the proof.  
		
	\end{proof}

	\section{Attack on the IBE in the rank metric} \label{subsec:attRSL} 
	
	In the previous section we showed that RankSign is not a secure signature scheme. This also shows the insecurity of the IBE proposal made in \cite{GHPT17a} since it is 
	partly based on it. It could be thought that it just suffices to replace in the IBE scheme \cite{GHPT17a} RankSign by another signature scheme in the rank metric. 
	This is already problematic, since RankSign was the only known rank metric code-based signature scheme up to now. We will actually show here 
	that the problem is deeper than this. We namely show that the parameters proposed in \cite{GHPT17a} can be broken by an algebraic attack that attacks the RSL problem directly and not the
	underlying signature scheme. We will however show that the constraints on the parameters of the scheme coming from
	Proposition \ref{propo:paramIBE} together with the new constraint for avoiding the algebraic attack exposed here can in theory be met.
	In the IBE \cite{GHPT17a} we are given a matrix $\Gm_{\cC_{\sgn}}$ of size $k_{\sgn} \times n_{\sgn}$ whose coefficients live in $\Fqm$ and the matrix $\Gm_{\cC_{\sgn}}\Em$ where $\Em$ has size $n_{\sgn} \times n_{\dec}$ with all its coefficients which live in a same secret subspace $F$ of dimension $w_{\dec}$ and an attacker wants  to recover $F$. We show in \S\ref{subsec:attRSL} 
	that under the condition $n_{\dec} > w_{\dec}(n_{\sgn} - k_{\sgn})$ (which is verified in \cite{GHPT17a}) the code $\cC$ defined by
	\begin{equation}\label{eq:C}
	\cC = \{ \ev\transpose{(\Gm_{\textup{sgn}}\Em)} : \ev \in \Fq^{n_{\dec}} \} \subseteq \Fqm^{k_{\textup{sgn}}}. 
	\end{equation}
	is an $\Fq$-subspace which contains words of weight $\leq w_{\dec} $ which reveal $F$. 
	It turns out that the subspace $\cC' \eqdef \cC \cap F^{k_{\sgn}}$ of words of $\cC$ whose coordinates all live in $F$ is of  dimension $\geq n_{\dec} - w_{\dec} (n_{\sgn} - k_{\sgn})$.
	We then apply standard algebraic techniques in Subsection \S\ref{subsec:attRSLgrob} to recover $\cC'$ and therefore $F$ from it. This breaks all the parameters proposed in  \cite{GHPT17a}. 
	We conclude this section by showing that there is in principle a way to choose the parameters of the IBE scheme to possibly avoid 
	this attack.

	\subsection{Low rank codewords from instances of the RSL problem}\label{subsec:attRSL}

	We prove here that a certain $\Fq$-linear code  that contains many low-weight codewords can be computed by the attacker. This is explained by
	\begin{theorem} \label{theo:attRSL}
		Let $(\Am,\Am\Em)$ be an instance of \textup{RSL} for parameters $n,k,N,w $ with $\Am\in \Fqm^{(n-k)\times n}$ in systematic form and $\Em \in \Fqm^{n\times N}$ where all its coefficients belong to a same subspace $F$ of dimension $w$. Furthermore, we suppose that
		\begin{equation} \label{eq:consRSL}
		N > w k.
		\end{equation}
		Let
		\begin{eqnarray*}
			\cC &\eqdef &\{ \ev\transpose{(\Am \Em)} : \ev \in \Fq^{N} \} \\ 
			\cC' & \eqdef& \cC \cap  F^{n-k}.
		\end{eqnarray*}
		$\cC'$ is an $\Fq$-subspace of $\cC$ of dimension $\geq N - wk$.
	\end{theorem}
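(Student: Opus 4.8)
The plan is to parametrize $\cC$ by the $\Fq$-linear map $\phi \colon \Fq^{N} \to \Fqm^{n-k}$, $\ev \mapsto \ev\transpose{(\Am\Em)}$, whose image is exactly $\cC$, and then to cut out $\cC'$ by imposing the membership condition $\phi(\ev) \in F^{n-k}$. Since $\cC$ and $F^{n-k}$ are both $\Fq$-subspaces of $\Fqm^{n-k}$, their intersection $\cC'$ is automatically an $\Fq$-subspace, so only the dimension bound requires work. The whole point will be to show that the condition $\phi(\ev)\in F^{n-k}$ amounts to at most $wk$ independent $\Fq$-linear constraints on $\ev$, which is far fewer than the naive count $(n-k)(m-w)$ one would get by simply asking each of the $n-k$ coordinates to lie in the codimension-$(m-w)$ subspace $F$.

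First I would rewrite a generic element of $\cC$ so that the subspace $F$ becomes visible. Setting $\zv \eqdef \ev\transpose{\Em}$, the entries of $\zv$ are $\Fq$-linear combinations of entries of $\Em$, all of which lie in $F$; hence $\zv \in F^{n}$ and $\phi(\ev) = \zv\transpose{\Am}$. Now I would exploit that $\Am$ is in systematic form, $\Am = \begin{bmatrix} I_{n-k} \mid \Am' \end{bmatrix}$ with $\Am' \in \Fqm^{(n-k)\times k}$, to split $\zv = (\zv_1,\zv_2)$ with $\zv_1 \in F^{n-k}$ and $\zv_2 \in F^{k}$, obtaining $\phi(\ev) = \zv_1 + \zv_2\transpose{\Am'}$. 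Because $\zv_1$ already lies in $F^{n-k}$, the condition $\phi(\ev) \in F^{n-k}$ is equivalent to $\zv_2\transpose{\Am'} \in F^{n-k}$.

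The crux is then a dimension count on this last condition. Reducing modulo $F$ coordinate-wise gives an $\Fq$-linear map $\Fqm^{n-k} \to (\Fqm/F)^{n-k}$, and $\phi(\ev) \in F^{n-k}$ is exactly the vanishing of the composite $\Phi(\ev) \eqdef [\phi(\ev)] = [\zv_2\transpose{\Am'}]$, the last equality holding since $[\zv_1]=0$. The essential observation is that $\Phi$ factors through the assignment $\zv_2 \mapsto [\zv_2\transpose{\Am'}]$, whose source $F^{k}$ has $\Fq$-dimension only $wk$; hence $\dim_{\Fq}\mathrm{im}\,\Phi \le wk$, and the solution space $\{\ev : \Phi(\ev)=0\}$ has $\Fq$-dimension at least $N-wk$. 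Pushing this solution space through $\phi$ yields $\cC'$, giving $\dim_{\Fq}\cC' \ge N-wk$; condition \eqref{eq:consRSL} is precisely what makes this bound positive, so that genuine nonzero codewords, all of whose coordinates lie in $F$ and which therefore have rank weight at most $w$, really do exist.

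The step I expect to carry all the content — and to be the real obstacle — is the factorization showing that only $wk$ constraints arise: everything hinges on recognizing that, after using the systematic form, the obstruction to lying in $F^{n-k}$ depends on $\ev$ only through the $k$-coordinate block $\zv_2 \in F^{k}$, which confines $\mathrm{im}\,\Phi$ to dimension $wk$ rather than the much larger $(n-k)(m-w)$. A secondary point to treat with care is the passage from the solution space in the $\ev$-variables to $\cC'$ itself through $\phi$: one should either argue that $\phi$ is injective on the relevant space, so that the dimension is preserved, or track $\dim_{\Fq}\ker\phi$ and verify that in the regime of interest it does not degrade the stated bound.
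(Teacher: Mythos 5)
Your proposal is essentially the paper's own proof: the same use of the systematic form $\Am = \begin{bmatrix} I_{n-k} \,|\, \Am' \end{bmatrix}$ to split $\Em$ into its first $n-k$ rows $\Em_1$ and last $k$ rows $\Em_2$, the same identity $\Am\Em = \Em_1 + \Am'\Em_2$, and the same count of $wk$ constraints over $\Fq$. The only (mild) difference is which space of $\ev$'s you solve for: the paper imposes the \emph{stronger sufficient} condition $\Em_2\transp{\ev}=\mathbf{0}$, i.e.\ works with $\cS \eqdef \{\ev \in \Fq^N : \Em_2\transp{\ev}=\mathbf{0}\}$ and gets $\dim_{\Fq}\cS \geq N - wk$ by expressing these $k$ equations in a basis of $F$, whereas you work with the \emph{exact} condition $\Phi(\ev)=0$, whose solution set $\ker\Phi = \phi^{-1}(\cC')$ contains $\cS$, and you get the same bound by factoring $\Phi$ through $F^k$. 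These are two packagings of the same $wk$-dimensional count.

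Regarding the ``secondary point'' you flag at the end: you are right that it is a genuine issue, but be aware that the paper does not treat it either --- its proof ends by asserting that $\dim_{\Fq}\cS \geq N-wk$ ``concludes the proof,'' silently identifying the dimension of a solution space in the $\ev$-variables with $\dim_{\Fq}\cC'$. What either argument actually yields is the following: since $\ker\phi \subseteq \ker\Phi$ and $\cC' = \phi(\ker\Phi)$, rank--nullity gives
\begin{equation*}
\dim_{\Fq} \cC' \;=\; \dim_{\Fq}\ker\Phi - \dim_{\Fq}\ker\phi \;\geq\; \dim_{\Fq}\cC - wk,
\end{equation*}
which coincides with the claimed bound $N - wk$ exactly when $\phi$ is injective, i.e.\ when the $N$ columns of $\Am\Em$ are linearly independent over $\Fq$. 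This can fail for degenerate homogeneous $\Em$ (e.g.\ repeated columns), so the theorem implicitly assumes a generic instance; for random RSL instances with the IBE parameters (where $N < wn$), injectivity holds with overwhelming probability. In short, your flagged gap is real but is shared with --- not worse than --- the paper's proof, and it is closed by genericity of the instance rather than by any idea missing from your route.
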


	\begin{proof} Let us first decompose $\Em$ in two parts $\left\lbrack\frac{\Em_{1}}{\Em_{2}}\right\rbrack$ where $\Em_{1}$ is formed by the first $n-k$ rows of $\Em$ and $\Em_{2}$ by the last $k$ ones. The matrix $\Am$ is in systematic form, namely $(I_{n-k}|\Am')$ where $\Am' \in \Fqm^{(n-k) \times k}$, which gives:
		$$
		\Am\Em = \Em_{1} + \Am'\Em_{2} 
		$$
		Therefore, to prove our theorem we just need to show that 
		$$
		\cS \eqdef \{\ev \in \Fq^{N} : \Em_{2}\transpose{\ev} = \mathbf{0} \}
		$$ 
		is an $\Fq$-subspace of dimension greater than $N-wk$. Indeed, for each error $\ev$ of $\cS$ we have $(\Am\Em)\transpose{\ev} = \Em_{1}\transpose{\ev}$ which belongs to $F^{n-k}$ as coefficients of $\Em_{1}$ are in the $\Fq$-subspace $F$ and those of $\ev$ are in $\Fq$.

		\noindent Denote the entry in row $i$ and column $j$ of $\Em_{2}$ by $E_{i,j}$. A word of $\cS$ satisfies
		$$
		\forall i \in \IInt{1}{k},  \quad \sum_{j=1}^{N} E_{i,j}e_{j} = 0.
		$$
		Looking in addition for $\ev$ that has all its entries in $\Fq$ and expressing these $k$ linear equations 
		over $\Fqm$ in a basis of $F$ (since 
		$\sum_{j=1}^{N} E_{i,j}e_{j}$ belongs by definition to $F\cdot \Fq = F$) we obtain $k \dim(F)=kw$ linear equations over $\Fq$ involving $N$ unknowns (the $e_{j}$'s) in $\Fq$.
		The solution space is therefore of dimension greater than $N -wk$ which concludes the proof of the theorem.  
	\end{proof}

	\subsection{How to find low rank codewords in instances of the RSL problem} \label{subsec:attRSLgrob} 
	Theorem \ref{theo:attRSL} showed that there are many codewords of weight $\leq w_{\dec} $ in the code $\cC$ defined in \eqref{eq:C}.
	Let us show now how these codewords can be recovered by an algebraic attack. The sufficient condition $n_{\dec} > w_{\dec}(n_{\textup{sgn}} - k_{\textup{sgn}})$
	ensuring the existence of such codewords is met for the parameters proposed in \cite{GHPT17a}.

	To explain our algebraic modeling of the problem, let us first recall that for a fixed basis $(\beta_{1},\cdots,\beta_{m})$ of $\Fqm$ over $\Fq$ we can view elements of $\Fqm^{k_{\sgn}}$ as matrices of size $m \times k_{\sgn}$:
	$$
	\forall \xv \in \Fqm^{k_{\sgn}}, \quad \Mat{\xv} = (X_{i,j}) \in \Fq^{m \times k_{\sgn}}\mbox{ where } x_{j} = \sum_{i=1}^{m} \beta_{i}X_{i,j}.
	$$
	The associated matrix code $\cC^{\textup{Mat}}$ is defined as:
	$$
	\cC^{\textup{Mat}} \eqdef \{ \Mat{\cv} : \cv \in \cC \} \subseteq \Fq^{m \times k_{\sgn}}.
	$$
	
	\noindent It is easily verified that this matrix-code has dimension $n_{\dec}$. It is clear now by applying Theorem \ref{theo:attRSL} that:
	\newline

	\noindent {\bf Fact 3.} $\cC^{\textup{Mat}}$ contains codewords of rank $\leq \dim(F)$ which form a $\Fq$-subspace of dimension $\geq n_{\dec} - w_{\dec} (n_{\sgn} - k_{\sgn})$.
	\newline

	These are just the codewords $\cv'$ which are of the form $\Mat{\cv}$ where $\cv \in \cC$ with $\Sp(\cv) \subseteq F$. 
	We do not expect other codewords of this rank in $\cC^{\textup{Mat}}$ since $w_{\dec} $ is much smaller than 
	the Varshamov-Gilbert bound in the case of the parameters proposed in \cite{GHPT17a}.
	\newline

	\noindent {\bf The basic bilinear system.} Finding codewords of rank $w_{\dec} $ in $\cC^{\textup{Mat}}$ can be expressed as an instance of the MinRank problem \cite{BFS99,C01}. Once again we propose the algebraic modeling which was suggested in \cite{AGHRZ17}. It consists here in setting up the algebraic system with unknowns $\xv^{i} = (x^{i}_{1},\cdots,x^{i}_{m}) \in \Fq^{m}$ and $\yv^{i}_{j} \in \Fq^{k_{\sgn}}$ for $1 \leq i \leq w_{\dec} $ and $1 \leq j \leq k_{\sgn}$ where the $\xv^{i}$'s can be thought as a basis of the unknown subspace $F$ and the $\yv_{j}^{i}$'s as coordinates of the codeword in this basis. In that case the codeword $\Mm$ of $\cC^{\textup{Mat}}$ of rank $w_{\dec} $ has the following form:
	$$
	\Mm = \begin{pmatrix} \sum_{i=1}^{w_{\dec} } x_1^{i} y_1^{i} & \sum_{i=1}^{w_{\dec} }x_1^{i} y_2^{i} & \hdots & \sum_{i=1}^{w_{\dec} }x_1^{i} y_{k_{\sgn}}^{i} \\
	\sum_{i=1}^{w_{\dec} }x_2^{i} y_1^{i} & \sum_{i=1}^{w_{\dec} }x_2^{i} y_2^{i} & \hdots & \sum_{i=1}^{w_{\dec} }x_2^{i} y_{k_{\sgn}}^{i} \\
	\vdots & \vdots & \vdots  & \vdots \\
	\sum_{i=1}^{w_{\dec} }x_{m}^{i} y_1^{i} & \sum_{i=1}^{w_{\dec} }x_{m}^{i} y_2^{i} & \hdots & \sum_{i=1}^{w_{\dec} }x_{m}^{i} y_{k_{\sgn}}^{i}
	\end{pmatrix}.
	$$

	\noindent Recall now that $\Cmat$ has the structure of an $\Fq$-subspace of $\Fq^{m\times k_{\sgn}}$ of dimension
	$n_{\dec}$. By viewing the elements of $\Cmat$ as vectors of $\Fq^{mk_{\sgn}}$, i.e. the matrix $\Mm=(M_{ij})_{\substack{1 \leq i \leq m\\ 1 \leq j \leq k_{\sgn}}}$ is viewed as the vector $\mv=(m_\ell)_{1 \leq \ell \leq mk_{\sgn}}$ where
	$m_{(i-1)k_{\sgn}+j} = M_{i,j}$, we can compute a parity-check matrix $\Hm^{\textup{Mat}}$ for it. It is an 
	$(mk_{\sgn} - n_{\dec})\times mk_{\sgn}$ matrix that we denote by $\Hm^{\textup{Mat}} = (H^{\textup{Mat}}_{ij})_{\substack{1 \leq i \leq mk_{\sgn}- n_{\dec} \\ 1 \leq j \leq mk_{\sgn} }}$. This matrix gives $mk_{\sgn}- n_{\dec}$ bilinear equations that have
	to be satisfied by the $x^{l}_{i}$'s and the $y^{l}_{j}$'s:
	
	\begin{equation} \label{syst:eqIBE} 
	\left\{
	\begin{array}{l}
	\mathop{\sum}\limits_{l=1}^{w_{\dec} }\mathop{\sum}\limits_{j = 1}^{k_{\sgn}} \mathop{\sum}\limits_{i=1}^{m} H^{\textup{Mat}}_{ 1,(i-1)k_{\sgn} + j} x_{i}^{l} y_{j}^{l} = 0 \\
	\qquad\vdots \\ 
	\mathop{\sum}\limits_{l=1}^{w_{\dec} }\mathop{\sum}\limits_{j = 1}^{k_{\sgn}} \mathop{\sum}\limits_{i=1}^{m} H^{\textup{Mat}}_{ mk_{\sgn} - n_{\dec},(i-1)k_{\sgn} + j} x_{i}^{l} y_{j}^{l} = 0 \\
	\end{array}
	\right.
	\end{equation}

	\noindent {\bf Restricting the number of solutions.}
	We have solved the bilinear system \eqref{syst:eqIBE} with Gr\"obner basis techniques that are implemented in Magma. To speed-up the resolution, as in the case of the attack on RankSign, we add new equations to \eqref{syst:eqIBE} which come from the vectorial structure of $F$ and the set of solutions.

	\noindent With our notation we can view $F$ as an $\Fq$ subspace of $\Fq^{m}$ of dimension $w_{\dec}$ generated by the rows of the matrix:
	$$
	\begin{pmatrix} 
	x_{1}^{1} & \cdots & x_{m}^{1} \\
	x_{1}^{2} & \cdots & x_{m}^{2} \\ 
	\vdots & & \vdots \\ 
	x_{1}^{w_{\dec} } & \cdots & x_{m}^{w_{\dec} }
	\end{pmatrix} 
	$$
	In this way, we can put this matrix into systematic form, it will generate the same subspace. Therefore 
	we can add equations 
	\begin{equation} \label{eq:IBE1} 
	\forall (i,j) \in \llbracket 1,w_{\dec}  \rrbracket^{2}, \mbox{ } j \neq i, \quad x_{i}^{j} = 0 \mbox{ and }  x_{i}^{i} = 1
	\end{equation} 
	without modifying the set of codewords of rank $w_{\dec} $. Furthermore, this set is an $\Fq$-subspace of dimension greater than $n_{\dec} - (n_{\sgn} - k_{\sgn})w_{\dec} $ and as in the case of the attack on RankSign we 
	may assume that for a random subset $I \subseteq \llbracket 1,k_{\sgn} \rrbracket \times \llbracket 1,w_{\dec}  \rrbracket$ of size $n_{\dec} - (n_{\sgn} - k_{\sgn}) - 1$ there is an element in this set for which:
	\begin{equation} \label{eq:IBE2} 
	\forall (j,i) \in I, \mbox{ } y^{i}_{j} = 0 \mbox{ and } y^{i_{0}}_{j_{0}} = 1 \mbox{ for } (i_{0},j_{0}) \notin I.
	\end{equation} 
	Equations \eqref{eq:IBE1} and \eqref{eq:IBE2} enable us to reduce the number of variables of the previous bilinear system. The following proposition summarizes the number of equations and variables that we finally get.

	\begin{proposition} By eliminating variables using Equations \eqref{eq:IBE1} and \eqref{eq:IBE2} in \eqref{syst:eqIBE} we obtain
		\begin{itemize}
			\item $mk_{\sgn} + w_{\dec} ^{2} + (n_{\sgn} - k_{\sgn})$ equations;

			\item $mw_{\dec}  + k_{\sgn}w_{\dec} $ unknowns.
		\end{itemize} 
		
	\end{proposition}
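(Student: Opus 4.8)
This proposition is a counting statement, so the plan is simply to tally the variables and equations contributed by each ingredient of the modeling, exactly as in the analogous bookkeeping for RankSign. I would first dispatch the unknowns, which are immediate. The entry in position $(a,b)$ of a candidate rank-$w_{\dec}$ element of $\Cmat$ is modeled as $\sum_{l=1}^{w_{\dec}} x_a^l y_b^l$, so the variables are the $x_a^l$ with $1 \le a \le m$ and $1 \le l \le w_{\dec}$, of which there are $m w_{\dec}$, together with the $y_b^l$ with $1 \le b \le k_{\sgn}$ and $1 \le l \le w_{\dec}$, of which there are $k_{\sgn} w_{\dec}$. Their sum is the announced $m w_{\dec} + k_{\sgn} w_{\dec}$, and none of the normalizations removes a variable from this count.

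For the equations I would collect three separate contributions. First, \eqref{syst:eqIBE} carries one bilinear equation for each row of the parity-check matrix $\Hm^{\textup{Mat}}$ of $\Cmat$; since $\Cmat$ is an $n_{\dec}$-dimensional $\Fq$-subspace of $\Fq^{m \times k_{\sgn}} \cong \Fq^{m k_{\sgn}}$, this is $m k_{\sgn} - n_{\dec}$ equations. Second, \eqref{eq:IBE1} forces the $w_{\dec} \times m$ matrix whose rows are the $\xv^l$ into systematic form, i.e.\ fixes its leading $w_{\dec} \times w_{\dec}$ block to the identity; this is $w_{\dec}$ equations $x_i^i = 1$ and $w_{\dec}(w_{\dec}-1)$ equations $x_i^j = 0$, for a total of $w_{\dec}^2$. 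Third, \eqref{eq:IBE2} normalizes the coordinate vector inside the solution space by setting it to zero on the index set $I$ and to one at a single further position, contributing $|I| + 1$ scalar equations.

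The proof is then finished by summing these three numbers and substituting the stated size of $I$, after which the $n_{\dec}$ coming from $\dim \Cmat$ is absorbed and one is left with the expression announced in the statement; this is the routine arithmetic that I would not belabor. The one step deserving care — and the place where I expect the real content to sit — is the justification that \eqref{eq:IBE2} may be imposed at all without discarding the sought solution. This is where Theorem \ref{theo:attRSL} (equivalently Fact 3) is used: it guarantees that the rank-$(\le w_{\dec})$ codewords of $\Cmat$ form an $\Fq$-subspace of dimension at least $n_{\dec} - w_{\dec}(n_{\sgn} - k_{\sgn})$, so that under the standing hypothesis $n_{\dec} > w_{\dec}(n_{\sgn} - k_{\sgn})$ this subspace is positive-dimensional and the affine normalization \eqref{eq:IBE2} still meets a genuine rank-$w_{\dec}$ element. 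Verifying that $I$ has an admissible (nonnegative) size and that fixing one coordinate to $1$ is compatible with the linear structure of the solution space is the only nonformal point; everything else is collecting the three tallies above.
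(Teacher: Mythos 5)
Your decomposition is the one the paper intends (the proposition is stated there without an explicit proof, as bookkeeping for the preceding discussion), and your individual tallies are correct: $mk_{\sgn}-n_{\dec}$ bilinear equations from \eqref{syst:eqIBE}, $w_{\dec}^{2}$ affine equations from \eqref{eq:IBE1}, $|I|+1$ equations from \eqref{eq:IBE2}, with all $mw_{\dec}+k_{\sgn}w_{\dec}$ unknowns retained in the count. (This is indeed the convention of the analogous RankSign proposition, whose stated count $nm-k(m+1)-t+2$ is exactly bilinear equations plus normalization equations, using $n-k=n/d$.) The gap sits precisely in the step you declared routine and skipped. With the size of $I$ given in the paper, $|I|=n_{\dec}-(n_{\sgn}-k_{\sgn})-1$, the sum is
\begin{equation*}
\bigl(mk_{\sgn}-n_{\dec}\bigr)+w_{\dec}^{2}+\bigl(n_{\dec}-(n_{\sgn}-k_{\sgn})\bigr)=mk_{\sgn}+w_{\dec}^{2}-(n_{\sgn}-k_{\sgn}),
\end{equation*}
which differs from the claimed $mk_{\sgn}+w_{\dec}^{2}+(n_{\sgn}-k_{\sgn})$ by $2(n_{\sgn}-k_{\sgn})$: the $n_{\dec}$ does cancel, as you say, but the $(n_{\sgn}-k_{\sgn})$ term comes out with the opposite sign. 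So the identity you assert (``one is left with the expression announced in the statement'') is false. The stated equation count would be obtained only if $|I|=n_{\dec}+(n_{\sgn}-k_{\sgn})-1$; in other words there is a sign inconsistency between the proposition and the definition of $I$, and a correct write-up must actually carry out the sum and flag (or repair) this discrepancy rather than absorb it.

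A secondary weakness: your justification of \eqref{eq:IBE2} does less than you claim. Theorem \ref{theo:attRSL} guarantees a solution space of dimension at least $n_{\dec}-w_{\dec}(n_{\sgn}-k_{\sgn})$, and imposing $|I|$ zero coordinates plus one coordinate equal to $1$ is generically consistent only when $|I|+1$ does not exceed that dimension. Since $|I|+1=n_{\dec}-(n_{\sgn}-k_{\sgn})$ exceeds $n_{\dec}-w_{\dec}(n_{\sgn}-k_{\sgn})$ as soon as $w_{\dec}>1$, positive-dimensionality alone does not make the normalization admissible; the natural choice $|I|=n_{\dec}-w_{\dec}(n_{\sgn}-k_{\sgn})-1$ would fix this, but again fails to reproduce the stated count. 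These mismatches originate in the paper itself, but your proposal glosses over them at exactly the point where verification was the content of the proof.
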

	In the ``typical regime'' where $m \approx n_{\sgn} \approx k_{\sgn}$ and $w_{\dec}  \approx n_{\sgn}^\varepsilon$ for 
	some $\varepsilon$ in $(0,1)$ we have a number of equations of order $n_{\sgn}^2$ and a number of unknowns of order $n_{\sgn}^{1+\varepsilon}$, therefore 
	typically the regime where we expect that the Gr\"obner basis techniques take subexponential time.

	\subsection{Numerical results}

	We give in Table \ref{table:numResIBE} our numerical results to find codewords of rank $w_{\dec} $ in instances of the RSL problem for the parameters chosen according to \cite{GHPT17a}. These results have been obtained with an Intel Core i5 processor, clocked at $1.6$ GHz using a single core, with $8$ Go of RAM. In our implementation, we verified that when we generated an instance whose 
	associated secret is the subspace $F$ we only got codewords whose coordinates live in this subspace  and therefore revealed it. 
	
	\begin{table} 
		\begin{center} 
			\begin{tabular}{|c|c|c|c|}
				\hline
				Intended Security & $(n_{\sgn},k_{\sgn},m,w_{\dec},n_{\dec},k_{\dec},q)$ & Time & Maximum Memory Usage \\
				\hline		
				128 bits & $(100,80,96,4,96,9,2^{192})$ & $626$s & $1.7$ GB \\
				\hline 			
			\end{tabular} 
		\end{center} 
		\caption{Attack on parameters of the rank-based IBE \cite{GHPT17a} \label{table:numResIBE} }
	\end{table}

	\subsection{Avoiding the attack}

	Although our attack breaks the parameters proposed in \cite{GHPT17a}, there might in principle be a way to instantiate the IBE 
	with a new signature scheme. Recall that the constraints that have to be satisfied are given by 
	\begin{eqnarray}
	\wrVG(q,m,n_{\sgn},k_{\sgn}) & \leq w_{\sgn} \leq & \frac{m(n_{\sgn}-k_{\sgn})}{\max(m,n_{\sgn})} \text{ (signature constraint)}  \label{eq:signature}\\
	w_{\sgn} w_{\dec}  & \leq & \wrVG(q,m,n_{\dec},k_{\dec}) \text{ (decoding works)} \label{eq:decoding}\\
	w_{\dec} (n_{\sgn} - k_{\sgn}) & \geq & n_{\dec} \text{ (for avoiding our attack)}. \label{eq:constraint}
	\end{eqnarray} 
	
	The lower-bound in \eqref{eq:signature} ensures that we can find a signature whereas the role of the upper-bound is to ensure that the 
	problem of finding a signature does not become easy. The constraint \eqref{eq:decoding} is here to ensure that the decoding procedure 
	used for recovering the plaintext works and the last constraint is here to avoid our attack. This set of parameters is non-empty under the condition to find an efficient hash and sign signature scheme. For instance, if we have a signature scheme which achieves the lower bound \eqref{eq:signature}, namely $w_{\sgn} = \wrVG(q,m,n_{\sgn},k_{\sgn})$ we can choose:
	$$
	n_{\sgn} = 100 \quad ; \quad k_{\sgn} = 75 \quad ; \quad n_{\dec} = 96 \quad ; \quad k_{\dec} = 4 \quad ; \quad w_{\dec}  = 4.
	$$

	More generally, if one wants to set parameters of the IBE \cite{GHPT17a} we propose to proceed in the following way.
	We first propose to choose $m = n_{\sgn}$ and a signature code for which the ratio $\frac{\wrVG(q,m,n_{\sgn},k_{\sgn})}{n_{\sgn}-k_{\sgn}}$ is sufficiently small (it can even approach $\frac{1}{2}$) and
	we choose 
	\begin{equation}
	\label{eq:choice1}
	w_{\sgn} = (1 - \varepsilon) (n_{\sgn}-k_{\sgn})
	\end{equation}
	for some appropriate $\varepsilon$. We then choose an $\Fqm$-linear code of parameters $[n_{\dec},k_{\dec}]$ of sufficiently small
	dimension such that 
	$$
	\wrVG(q,m,n_{\dec},k_{\dec}) \geq (1-\varepsilon) n_{\dec}.
	$$
	This is possible in principle. Therefore we can choose $w_{\dec} $ such that $w_{\sgn}w_{\dec}  \geq (1-\varepsilon)n_{\dec}$ and for which
	\eqref{eq:decoding} holds. By satisfying the two first constraints \eqref{eq:signature} and \eqref{eq:decoding} in this way, we also satisfy the last one, namely Equation \eqref{eq:constraint}. 
	This can be verified by arguing that 
	\begin{eqnarray*}
		w_{\dec} (n_{\sgn} - k_{\sgn}) &=  & \frac{w_{\sgn} w_{\dec} }{1-\varepsilon} \;\;\text{ (we use \eqref{eq:choice1})}\\
		& \geq & \frac{n_{\dec}(1-\varepsilon) }{1-\varepsilon} \;\;\text{ (we use the particular choice of $w_{\dec} $)}\\
		& = & n_{\dec}
	\end{eqnarray*}

	\subsection{Comparison with previous attacks against RSL}

	Recall here that the Rank Support Learning (RSL) problem for parameters $n,k,N,w $ can also be expressed as follows: we have access to a matrix of full rank $\Am\in \Fqm^{(n-k) \times n}$ and to $N$ syndromes $\Am\transpose{\ev}$ for $\ev$ chosen uniformly at random in $F^n$ where $F$ is some fixed subspace of $\Fqm$ of dimension $w $.
	The problem is then to recover $F$. When $N = 1$ this is just the Rank Syndrome Decoding (RSD) problem (see Problem \ref{def:RSD} in \S\ref{subsec:rankBasedCrypto}). It is readily verified that the difficulty of RSL decreases when $N$ grows, however the question for cryptographic purposes is: ``how large $N$ can be while RSL remains hard?'' In \cite[\S4, p14]{GHPT17a} a first answer was given by showing that $N$ has to verify 
	\begin{equation} \label{eq:consN}
	N < w n
	\end{equation}
	otherwise a polynomial attack can easily be mounted. 
	Here, we strengthen  this condition on $N$, we require namely that in order to avoid our new attack we should have	
	\begin{equation}\label{eq:consOurs}
	N \leq w k \quad (k <n).
	\end{equation}
	where $k$ is the dimension of the code of parity-check matrix $\Am \in \Fqm^{(n-k)\times n}$ used in the instance of RSL. 
	This condition is clearly stronger since we always have $k < n$ at the cost of trading a polynomial attack in the case where \eqref{eq:consN} is met with 
	a subexponential attack when \eqref{eq:consOurs} is not met.
	
	In the context of the IBE it is actually {\em significantly stronger}.
	This comes from the fact that in this context we really expect that under reasonable assumptions that $k \ll n$. This can be explained as follows.
	In this case \eqref{eq:consOurs} translates into $n_{\dec} \leq w_{\dec} (n_{\sgn} - k_{\sgn})$. The point is that in the typical regime which is needed for the IBE, 
	we have $n_{\sgn}-k_{\sgn} \ll n_{\sgn}$. By typical regime we mean here that we can assume that for the IBE \cite{GHPT17a} we have
	
	\begin{ass}\label{ass:0}
		\begin{equation} \label{ass:00}
		\frac{k_{\sgn}}{n_{\sgn}} = \Omega(1);
		\end{equation}
		\begin{equation}\label{ass:10} 
		m = \Theta(n_{\sgn});
		\end{equation}
		\begin{equation}\label{ass:01}
		w_{\dec}=O(1)
		\end{equation}
	\end{ass} 
	This assumption is minimal in the special case of the IBE \cite{GHPT17a} as we are going to explain.

	Equations \eqref{ass:00} and \eqref{ass:10} ensure that we are in the regime where the Gilbert-Varshamov and the Singleton bounds do not collapse which is essential as explained in the previous subsection to obtain  parameters avoiding our attack on RSL.

	Equation \eqref{ass:01} permits to avoid a polynomial attack against the problem RSL. Indeed, suppose  that $w_{\dec}$ is bounded, which is $w_{\dec} = O(1)$. Recall that in the IBE, instances of RSL have the following form $(\Gm_{\cC_{\sgn}},\Gm_{\sgn}\Em)$ where $\Gm_{\cC_{\sgn}} \in \Fqm^{k_{\sgn} \times n_{\sgn}}$ and $\Em$ is homogeneous with underlying subspace $F$ of dimension $w_{\dec}$. Solving here the Rank Syndrome Decoding for $\Gm_{\cC_{\sgn}}$, a weight $w_{\dec}$ and the first column of $\Em$ as syndrome will give with  high probability $F$ as $w_{\dec}$ is smaller than the Varshamov-Gilbert bound. By using Gr\"obner basis techniques for this and writing equations in the small field $\Fq$ this gives: 
	\begin{enumerate}
		\item $n_{\sgn}w_{\dec}$ unknowns;

		\item $mk_{\sgn}$ bilinear equations.
	\end{enumerate}
	Under Assumptions \eqref{ass:00},\eqref{ass:10} and the fact that $k_{\sgn}\leq n_{\sgn}$, we have $mk_{\sgn} = \Theta(n_{\sgn}k_{\sgn}) = \Theta(n_{\sgn}^{2})$. On the other hand, the number of unknowns is $O(n_{\sgn})$ as $w_{\dec} = O(1)$. This is the regimewhere we expect  to solve the corresponding bilinear system in polynomial time.
	Therefore we can safely assume that  $w_{\dec}$ tends to infinity to avoid such a  polynomial attack.

	Assumption \ref{ass:0} leads in this case to the following proposition
	\begin{proposition}
		Under Assumption \ref{ass:0}, we have when $n_{\sgn}$ tends to infinity:
		$$
		n_{\sgn} - k_{\sgn} = o(n_{\sgn}).
		$$ 
	\end{proposition}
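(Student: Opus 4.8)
The plan is to derive the bound from three ingredients: the signature constraint \eqref{eq:signature}, the elementary fact that the error the code $\cC_{\dec}$ must decode has bounded rank, and the hypothesis $w_{\dec}\to\infty$ (the operative form of \eqref{ass:01}, justified in the paragraph just above the statement). Throughout write $s\eqdef n_{\sgn}-k_{\sgn}$; the goal is $s=o(n_{\sgn})$.

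First I would bound $w_{\sgn}$ from below. To be able to sign an arbitrary message, the weight $w_{\sgn}$ must reach the Gilbert--Varshamov distance, i.e. the left inequality of \eqref{eq:signature} gives $w_{\sgn}\ge \wrVG(q,m,n_{\sgn},k_{\sgn})$ (below it a random syndrome has no preimage of weight $w_{\sgn}$). Now set $c\eqdef m/n_{\sgn}$, which by \eqref{ass:10} stays in a fixed compact subinterval of $(0,\infty)$, and $\sigma\eqdef s/n_{\sgn}$. Since $(m-n_{\sgn})^2+4k_{\sgn}m=n_{\sgn}^2\!\left[(c+1)^2-4c\sigma\right]$, the asymptotic expression \eqref{eq:VGdist} reads
\begin{equation*}
\wrVG(q,m,n_{\sgn},k_{\sgn})=\frac{n_{\sgn}}{2}\left[(c+1)-\sqrt{(c+1)^2-4c\sigma}\,\right](1+o(1)).
\end{equation*}
Multiplying the bracket by its conjugate turns it into $\frac{4c\sigma}{(c+1)+\sqrt{(c+1)^2-4c\sigma}}\ge \frac{4c\sigma}{2(c+1)}=\frac{2c\sigma}{c+1}$, so that $\wrVG(q,m,n_{\sgn},k_{\sgn})\ge \frac{c}{c+1}\,s\,(1+o(1))=\Omega(s)$, the implied constant being uniform because $c/(c+1)$ is bounded away from $0$.

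Next I would bound $w_{\sgn}w_{\dec}$ from above. The vector $\left(\uv_{id}\Gm_{\cC_{\sgn}}-\cH(id)\right)\Em\in\Fqm^{n_{\dec}}$ that $\cC_{\dec}$ has to decode has rank weight $w_{\sgn}w_{\dec}$ (see the proof of Proposition \ref{propo:paramIBE}); being the rank of an $m\times n_{\dec}$ matrix it is at most $\min(m,n_{\dec})\le m$, whence $w_{\sgn}w_{\dec}\le m=O(n_{\sgn})$ by \eqref{ass:10}. Combining with the previous paragraph, $\Omega(s)\,w_{\dec}\le w_{\sgn}w_{\dec}=O(n_{\sgn})$, i.e. $s=O(n_{\sgn}/w_{\dec})$; as $w_{\dec}\to\infty$ this gives $s=n_{\sgn}-k_{\sgn}=o(n_{\sgn})$.

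The delicate point is the uniform lower bound $\wrVG(q,m,n_{\sgn},k_{\sgn})=\Omega(s)$: it must be extracted from \eqref{eq:VGdist} simultaneously over the whole range $m=\Theta(n_{\sgn})$ and for every admissible relative codimension $\sigma\in(0,1)$, which is exactly what the conjugate estimate above provides. I would also make explicit that the argument uses $w_{\dec}\to\infty$ rather than the literal $w_{\dec}=O(1)$ written in \eqref{ass:01}: this is the hypothesis justified immediately before the statement (a bounded $w_{\dec}$ would instead expose a polynomial-time attack, hence is excluded), and if $w_{\dec}$ were genuinely bounded the conclusion could fail.
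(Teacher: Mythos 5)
Your proof is correct and follows essentially the same route as the paper's: both combine the decoding constraint $w_{\sgn}w_{\dec}=O(n_{\sgn})$ coming from Proposition \ref{propo:paramIBE} and $m=\Theta(n_{\sgn})$, the signature constraint $w_{\sgn}\geq \wrVG(q,m,n_{\sgn},k_{\sgn})$, the fact that $w_{\dec}$ tends to infinity, and the asymptotic formula \eqref{eq:VGdist} to conclude $n_{\sgn}-k_{\sgn}=o(n_{\sgn})$. The only difference is one of completeness: your conjugate estimate showing $\wrVG(q,m,n_{\sgn},k_{\sgn})\geq \frac{c}{c+1}(n_{\sgn}-k_{\sgn})(1+o(1))$ uniformly in $m=\Theta(n_{\sgn})$ is exactly the step the paper leaves as ``easily verified,'' and you are right that \eqref{ass:01} must be read as $w_{\dec}\rightarrow\infty$ (as the paper's own proof does), not as the literal $w_{\dec}=O(1)$ printed in the assumption.
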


	\begin{proof} From Proposition \ref{propo:paramIBE} we have:
		$$
		w_{\sgn}w_{\dec} \leq \min(n_{\dec},m)
		$$
		and thus from Assumption \ref{ass:0}:
		$$
		w_{\sgn}w_{\dec} = O(n_{\sgn})
		$$
		which gives 
		\begin{equation} \label{eq:d} 
		\frac{w_{\sgn}}{n_{\sgn}} = O\left( \frac{1}{w_{\dec}} \right).
		\end{equation}

		Under Assumption \ref{ass:0} we have that $w_{\dec}$ tends to infinity. Therefore we get
		\begin{equation} \label{eq:wsgn}
		w_{\sgn} = o(n_{\sgn})
		\end{equation}
		Now under the signature constraint (see \eqref{eq:signature}) we have that $w_{\sgn} \geq \wrVG(q,m,n_{\sgn},k_{\sgn})$. From Equation \eqref{eq:VGdist} and Assumption \ref{ass:0}, particularly \eqref{ass:10}, it is easily verified that the last inequality and \eqref{eq:wsgn} imply $n_{\sgn} - k_{\sgn} = o(n_{\sgn})$ which concludes the proof of the proposition. 	
		
	\end{proof}

	\section{Attack on the IBE in the Hamming metric}
	
	The purpose of this section is to show that there is an even more fundamental problem with the general IBE scheme given in Section \ref{sec:IBEscheme} in the Hamming metric.
	We will namely prove here that due to the constraint on the parameters coming from Proposition \ref{propo:paramIBE}, we can not find a set of parameters which would avoid
	an attack based on using generic decoding techniques. Even the simplest of those techniques, namely the Prange algorithm \cite{P62}, breaks the IBE in the Hamming metric in polynomial time. 
	We refer the reader to Section \S\ref{sec:IBEscheme} where we introduced all the notations that we are going to use.

	To show that the IBE can be attacked in the Hamming metric we proceed as follows. The attacker knows 
	$\Gm_{\cC_{\sgn}}\Em$ and that the columns of $\Em$ have weight $w_{\dec}$. We will show that we can solve efficiently for  the range of parameters admissible for the IBE the following syndrome decoding problem: given a matrix $\Gm_{\cC_{\sgn}} \in \F_{2}^{k_{\sgn}\times n_{\sgn}}$ and $\sv \in \F_{2}^{k_{\sgn}}$ such that there exists $\ev \in \F_{2}^{n_{\sgn}}$ of weight $w_{\dec}$ for which 
	$\Gm_{\cC_{\sgn}}\transpose{\ev} = \transpose{\sv}$, we want to recover $\ev$. This allows to recover the columns of $\Em$ and therefore $\Em$.
	The scheme is broken with this knowledge, since the attacker also knows 
	$\cH(id)\Em + \mv\Gm_{\cC_{\dec}}$, $\cH(id)$ and $\Gm_{\cC_{\dec}}$. This is used to derive $\mv \Gm_{\cC_{\dec}}$ and finally $\mv$.
	
	To solve this decoding problem, we use the Prange algorithm (see \cite{P62}) whose complexity is, up to a polynomial factor in $n_{\sgn}$, equal to:
	\begin{equation} \label{eq:cpxPrange} 
	\frac{ \binom{n_{\sgn}}{w_{\dec}} }{\binom{k_{\sgn}}{w_{\dec}}}
	\end{equation} 
	In the special case of the IBE we proved in Proposition \ref{propo:paramIBE} that the parameters have to verify the following constraint:
	$$
	w_{\sgn}w_{\dec} = O(n_{\sgn}).
	$$
	Now the parameter $w_{\sgn}$ can not be too small either, since for fixed $w_{\sgn}$ the algorithms for decoding linear codes also solve the signature forgery in polynomial time.
	This problem amounts in the case of the IBE to find a $\uv_{id}$ such that 
	$$
	|\uv_{id}\Gm_{\cC_{\sgn}} - \cH(id)| = w_{\sgn}.
	$$
	We will therefore make a minimal assumption that ensures that the decoding algorithms for solving this problem have at least 
	some (small) subexponential complexity. We also make the same assumption for the aforementioned recovery of $\ev$. This is obtained by assuming that
	\begin{ass}\label{ass:minimal}$ $
		\begin{eqnarray} 
		w_{\dec} & = &\Omega(n^{\varepsilon}) \quad \mbox{ for some } \quad \varepsilon > 0 \label{eq:ass2}\\
		w_{\sgn} & = & \Omega(n^{\varepsilon'}) \quad \mbox{ for some } \quad \varepsilon' > 0  \label{eq:ass3}
		\end{eqnarray} 	
	\end{ass}

	\begin{restatable}{proposition}{proppolynomial}
		\label{prop:polynomial}
		Under Assumption \ref{ass:minimal}, the Prange algorithm breaks the IBE scheme in Hamming metric in  polynomial time in $n_{\sgn}$. 
	\end{restatable}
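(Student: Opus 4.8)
The plan is to reduce the statement to a bound on the number of iterations performed by Prange's algorithm and then show that this number is $n_{\sgn}^{O(1)}$. Since a single iteration costs only a Gaussian elimination on $\Gm_{\cC_{\sgn}}$, i.e. $\mathrm{poly}(n_{\sgn})$ operations, it suffices to prove that the iteration count recorded in \eqref{eq:cpxPrange}, namely $\binom{n_{\sgn}}{w_{\dec}}/\binom{k_{\sgn}}{w_{\dec}}$, grows at most polynomially. Note that in this decoding instance $\Gm_{\cC_{\sgn}}$ is used as a parity-check matrix, so the $k_{\sgn}$ rows form the redundancy in which the $w_{\dec}$ error positions must be caught by a random information set; this is why $k_{\sgn}$, and not $n_{\sgn}-k_{\sgn}$, sits in the denominator. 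First I would take logarithms and reduce everything to controlling the quantities $n_{\sgn}-k_{\sgn}$ and $k_{\sgn}-w_{\dec}$.

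The first step is a routine telescoping estimate. Writing the ratio as a product and using $\ln(1+x)\le x$,
$$\ln\frac{\binom{n_{\sgn}}{w_{\dec}}}{\binom{k_{\sgn}}{w_{\dec}}}=\sum_{i=0}^{w_{\dec}-1}\ln\!\left(1+\frac{n_{\sgn}-k_{\sgn}}{k_{\sgn}-i}\right)\le\frac{w_{\dec}\,(n_{\sgn}-k_{\sgn})}{k_{\sgn}-w_{\dec}}.$$
Hence the number of iterations is at most $\exp\!\big(w_{\dec}(n_{\sgn}-k_{\sgn})/(k_{\sgn}-w_{\dec})\big)$, and the whole argument comes down to bounding this exponent by $O(\log n_{\sgn})$.

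The crux — the step I expect to be the real content — is to bound the redundancy $n_{\sgn}-k_{\sgn}$ of the signature code. Here I would invoke the fact that the scheme must actually \emph{sign}: the trapdoor must produce, for essentially every hash value, a codeword of $\cC_{\sgn}$ at Hamming distance $w_{\sgn}$, so the balls of radius $w_{\sgn}$ centred at the $2^{k_{\sgn}}$ codewords must cover almost all of $\F_2^{n_{\sgn}}$. This sphere-covering inequality gives $2^{n_{\sgn}-k_{\sgn}}\le\sum_{i=0}^{w_{\sgn}}\binom{n_{\sgn}}{i}\le n_{\sgn}^{\,w_{\sgn}+1}$, hence $n_{\sgn}-k_{\sgn}=O(w_{\sgn}\log n_{\sgn})$. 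From \eqref{eq:ass2} together with $w_{\sgn}w_{\dec}=O(n_{\sgn})$ (Proposition \ref{propo:paramIBE}, \eqref{eq:paramIBER}) one gets $w_{\sgn}=O(n_{\sgn}^{1-\varepsilon})=o(n_{\sgn})$, and symmetrically $w_{\dec}=o(n_{\sgn})$ from \eqref{eq:ass3}. Consequently $n_{\sgn}-k_{\sgn}=o(n_{\sgn})$, so $\cC_{\sgn}$ has rate tending to $1$, $k_{\sgn}=\Theta(n_{\sgn})$, and therefore $k_{\sgn}-w_{\dec}=\Theta(n_{\sgn})$.

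Finally I would combine these estimates. Substituting $n_{\sgn}-k_{\sgn}=O(w_{\sgn}\log n_{\sgn})$ and $k_{\sgn}-w_{\dec}=\Theta(n_{\sgn})$ into the telescoping bound yields
$$\frac{w_{\dec}\,(n_{\sgn}-k_{\sgn})}{k_{\sgn}-w_{\dec}}=O\!\left(\frac{w_{\dec}w_{\sgn}}{n_{\sgn}}\log n_{\sgn}\right)=O(\log n_{\sgn}),$$
where the last equality is exactly the constraint $w_{\sgn}w_{\dec}=O(n_{\sgn})$. Thus the number of Prange iterations is $\exp\!\big(O(\log n_{\sgn})\big)=n_{\sgn}^{O(1)}$, and recovering each column of $\Em$, hence $\Em$ and then $\mv$, runs in polynomial time, as described above. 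The genuinely delicate point is the covering bound on $n_{\sgn}-k_{\sgn}$: it is precisely what converts the naively subexponential factor $\exp(\Theta(w_{\dec}))$ — which one would obtain for a constant-rate signature code — into a merely polynomial one, by forcing $\cC_{\sgn}$ to be of very high rate.
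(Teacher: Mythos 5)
Your proof is correct, and its computational core takes a genuinely different, more elementary route than the paper's. The paper passes through the binary entropy function: it uses Lemma \ref{lemm:binent} to replace $\log_2\binom{n_{\sgn}}{w_{\dec}}-\log_2\binom{k_{\sgn}}{w_{\dec}}$ by $n_{\sgn}h(w_{\dec}/n_{\sgn})-k_{\sgn}h(w_{\dec}/k_{\sgn})$ up to $O(\log_2 n_{\sgn})$, and then devotes most of the proof to a Taylor expansion of this difference into three terms $a(n)+b(n)+c(n)$, bounded separately after truncating $h$ at an order $p>1/\varepsilon'$. You bypass all of this with the telescoping estimate
\[
\ln\frac{\binom{n_{\sgn}}{w_{\dec}}}{\binom{k_{\sgn}}{w_{\dec}}}=\sum_{i=0}^{w_{\dec}-1}\ln\Bigl(1+\frac{n_{\sgn}-k_{\sgn}}{k_{\sgn}-i}\Bigr)\le\frac{w_{\dec}\,(n_{\sgn}-k_{\sgn})}{k_{\sgn}-w_{\dec}},
\]
which reduces the proposition to the single estimate $w_{\dec}(n_{\sgn}-k_{\sgn})=O(n_{\sgn}\log n_{\sgn})$, with no entropy asymptotics and no truncation parameter. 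The second ingredient --- that $\cC_{\sgn}$ must have rate tending to $1$ --- is the same in both proofs, only packaged differently: the paper invokes the hash-and-sign constraint $w_{\sgn}\ge n_{\sgn}h^{-1}(1-k_{\sgn}/n_{\sgn})$ (decoding at or beyond the Gilbert--Varshamov radius) and deduces $k_{\sgn}\sim n_{\sgn}$, while you rederive the same fact from first principles as a sphere-covering condition $2^{n_{\sgn}-k_{\sgn}}\le\sum_{i\le w_{\sgn}}\binom{n_{\sgn}}{i}\le n_{\sgn}^{w_{\sgn}+1}$; the two formulations are equivalent (the GV radius is precisely where the ball volume reaches $2^{n_{\sgn}-k_{\sgn}}$), and both yield $n_{\sgn}-k_{\sgn}=O(w_{\sgn}\log n_{\sgn})$. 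From that point the two arguments close identically, combining $w_{\sgn}w_{\dec}=O(n_{\sgn})$ from Proposition \ref{propo:paramIBE} with Assumption \ref{ass:minimal} to get $w_{\sgn}=O(n_{\sgn}^{1-\varepsilon})$ and $w_{\dec}=O(n_{\sgn}^{1-\varepsilon'})$, hence $k_{\sgn}-w_{\dec}=\Theta(n_{\sgn})$ and an iteration exponent of $O(\log n_{\sgn})$, i.e.\ $n_{\sgn}^{O(1)}$ iterations each of polynomial cost. Your route buys brevity and robustness: a one-line bound valid for all admissible parameters, with no asymptotic expansion of $h$ and no choice of truncation order depending on $\varepsilon'$; the paper's heavier entropy computation yields nothing sharper here, so your simplification is a net gain.
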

	
	\begin{proof} 	Recall that parameters of the IBE in Hamming metric are $n_{\sgn},k_{\sgn},w_{\sgn}$. For the sake of simplicity let,
		$$
		n \eqdef n_{\sgn} \quad ; \quad k \eqdef k_{\sgn} \quad ; \quad w \eqdef w_{\sgn}. 
		$$
		We start the proof by noticing that $ww_{\dec} = O(n)$ and  Assumption \ref{ass:minimal} actually imply the   ``converse'' inequalities
		\begin{eqnarray*}
			w_{\dec} & = &O(n^{1-\varepsilon'}) \\
			w_{\sgn} & = & O(n^{1-\varepsilon}) 
		\end{eqnarray*}

		Let us now derive an asymptotic expression for \eqref{eq:cpxPrange} by studying:
		$$
		\log_{2}\binom{n}{w_{\dec}}  - \log_{2} \binom{k}{w_{\dec}} 
		$$
		This is obtained through:
		\begin{lemma} \label{lemm:binent}
			$$
			\log_{2} \binom{n}{l}  = nh\left( \frac{l}{n} \right) + O\left(\log_{2}n \right) 
			$$
			for $h$ being defined over $\lbrack 0,1 \rbrack$ as:
			$$
			h(x) \eqdef -x \log_{2}x - (1-x)\log_{2}x 
			$$
		\end{lemma}
		\noindent Therefore to show that the Prange algorithm is polynomial in $n$ it is sufficient to prove that:
		\begin{equation}\label{eq:pcpx} 
		n h \left( \frac{w_{\dec}}{n} \right) - k h \left( \frac{w_{\dec}}{k} \right)
		\end{equation} 
		is an $O\left(\log_{2} n \right)$. 
		Recall now that in a context of code-based hash and sign, the weight of the decoding has to be greater than the Varshamov Gilbert bound, namely:
		$$
		w \geq nh^{-1} \left( 1 - \frac{k}{n} \right) \iff \frac{k}{n} \geq 1 - h \left( \frac{w}{n} \right) 
		$$

		\noindent From $w = O\left(n^{1-\varepsilon}\right)$, $k \leq n$  and 
		by using that $h(x) \mathop{=}\limits_{x \rightarrow 0} O\left(- x \log_{2}x \right)$ we get:
		\begin{equation}\label{eq:1proof}
		\frac{k}{n} =  1  +  O\left( - \frac{w}{n} \log_{2}  \frac{w}{n} \right)
		\end{equation}
		and $k\sim n$.
		We are now ready to show that \eqref{eq:pcpx} is asymptotically an $O(\log_{2}n)$. As $k \sim n$ and $w_{\dec} = O(n^{1-\varepsilon'})$  by using now that $h(x) \mathop{=}\limits_{x \rightarrow 0} -x \log_{2}x + \frac{1}{\ln(2)} \left( x -\sum_{l=2}^{p-1}\frac{x^{l}}{l(l-1)} + O(x^{p})\right)$ for an integer $p$ greater than $\frac{1}{\varepsilon'}$ we have:
		\begin{equation} \label{eq:cpxprr}
		n h \left( \frac{w_{\dec}}{n} \right) - k h \left( \frac{w_{\dec}}{k} \right) = a(n) + b(n) + c(n)
		\end{equation} 
		where 
		$$
		a(n) \eqdef  k  \frac{ w_{\dec} }{ k } \log_{2} \frac{ w_{\dec} }{ k } -n \frac{w_{\dec}}{n}\log_{2}\frac{w_{\dec}}{n}
		$$
		\begin{equation} \label{eq:b(n)}
		b(n) \eqdef   n\frac{w_{\dec}}{\ln(2)n} - k \frac{w_{\dec}}{\ln(2)k} + \frac{1}{\ln(2)} \sum_{l=2}^{p-1} k\frac{(w_{\dec}/k)^{l}}{l(l-1)} - n \frac{(w_{\dec}/n)^{l}}{l(l-1)}   
		\end{equation}
		and
		$$
		c(n) \eqdef  n O\left( \frac{w_{\dec}^{p}}{n^{p}} \right) + kO\left( \frac{w_{\dec}^{p}}{k^{p}} \right)  
		$$ 
		We easily have:
		\begin{align*} 
		c(n) &= O \left( \frac{w_{\dec}^{p}}{n^{p-1}} \right) \\
		&= O \left( \frac{1}{n^{p\varepsilon' -1}} \right) \quad (\mbox{because } w_{\dec} = O(n^{1-\varepsilon'})) \\
		&= o(1) \quad (\mbox{because } p> 1/\varepsilon') 
		\end{align*} 
		Let us now compute $a(n)$:
		\begin{align*} 
		a(n)&= k  \frac{ w_{\dec} }{ k } \log_{2} \frac{ w_{\dec} }{ k } - n \frac{w_{\dec}}{n}\log_{2}\frac{w_{\dec}}{n} \\
		&=  w_{\dec} \log_{2} \frac{w_{\dec}}{k} - w_{\dec} \log_{2} \frac{w_{\dec}}{n} \\
		&= -w_{\dec} \log_{2} \frac{k}{n} \\
		&= -w_{\dec} \log_{2} \left(  1  +  O\left( \frac{w}{n} \log_{2} \frac{w}{n} \right) \right) \quad \mbox{(because of \eqref{eq:1proof})} 
		\end{align*}
		Recall now that we have $w = O(n^{1-\varepsilon})$ and by using $\log_{2}(1+x) \mathop{=}\limits_{x \rightarrow 0} O(x) $ we get:
		\begin{align*} 
		a(n) &= -w_{\dec} O\left( \frac{w}{n} \log_{2} \frac{w}{n}  \right)   
		\end{align*} 
		which gives as $ww_{\dec} = O(n)$ and $w = O(n^{1-\varepsilon})$:
		\begin{equation} \label{eq:resa(n)}
		a(n) =  O\left( \log_{2}  \frac{w}{n} \right) = O\left( \log_{2}n \right) 
		\end{equation} 
		
		\noindent Let us now compute $b(n)$ which is defined in \eqref{eq:b(n)}:
		\begin{align*} 
		b(n) &=   n\frac{w_{\dec}}{\ln(2)n} - k \frac{w_{\dec}}{\ln(2)k} + \frac{1}{\ln(2)} \sum_{l=2}^{p-1} k\frac{(w_{\dec}/k)^{l}}{l(l-1)} - n \frac{(w_{\dec}/n)^{l}}{l(l-1)} \\
		&= \frac{1}{\ln(2)} \sum_{l=2}^{p-1} w_{\dec}^{l}\frac{1}{n^{l-1}}\frac{ (k/n)^{1-l} - 1}{l(l-1)} 
		\end{align*} 
		Recall now that 
		$$
		\frac{k}{n}=  1  +  O\left(- \frac{w}{n} \log_{2} \frac{w}{n} \right)
		$$
		therefore,
		\begin{align*} 
		\left(\frac{k}{n}\right)^{1-l} -1 &= \left(1  +  O\left(- \frac{w}{n} \log_{2} \frac{w}{n}   \right) \right)^{1-l} - 1 \\
		&= 1  + O\left(- \frac{w}{n}\log_{2}\frac{w}{n} \right)  - 1 \\
		&=  O\left(- \frac{w}{n} \log_{2} \frac{w}{n}  \right) 
		\end{align*} 
		
		\noindent Then we get:
		\begin{align*}
		b(n) &= \frac{1}{\ln(2)}  \sum_{l=2}^{p-1} \frac{w_{\dec}^{l}}{n^{l-1}}\frac{1}{l(l-1)}   O\left(- \frac{ w}{n} \log_{2}\frac{w}{n} \right) \\
		&= \frac{1}{\ln(2)}  \sum_{l=2}^{p-1} \frac{w_{\dec}^{l-1}}{l(l-1)n^{l-1}} O \left(- \frac{w_{\dec}w}{n} \log_{2}  \frac{w}{n}  \right)    \\
		&= \frac{1}{\ln(2)}  \sum_{l=2}^{p-1} o(1)O\left(- \log_{2} \frac{w}{n} \right) \\
		& \qquad\qquad \mbox{(because } w_{\dec}  = O(n^{1-\varepsilon'}) = o(n) \mbox{ as } \varepsilon'>0 \mbox{ and } w_{\dec} w = O(n)). \\
		&= o\left(-\log_{2} \frac{w}{n} 	\right)   \\
		&= O\left(\log_{2}n \right)
		\end{align*}

		\noindent Therefore, by combining this result with \eqref{eq:cpxprr} and \eqref{eq:resa(n)} we get:
		$$
		kh\left( \frac{ w_{\dec} }{k} \right) - nh\left( \frac{w_{\dec} }{n} \right) = O\left(\log_{2}n\right)
		$$
		which concludes the proof. 
		
	\end{proof} 
	
	\section{Concluding remarks}\label{sec:conclusion}
	
	We have presented here our attacks against the rank-based signature scheme RankSign and the IBE scheme proposed in \cite{GHPT17a}. 
	Several comments can be made.
	\newline

	\noindent {\bf Attack on RankSign.} 
	We actually showed that in the case of RankSign, the complexity is polynomial for all possible strategies for choosing the parameters. 
	Repairing the RankSign scheme seems to require to modify the scheme itself, not just adjust the parameters.
	It might be tempting to conjecture that the approach  against RankSign could also be used to mount an attack on the NIST submissions
	based on LRPC codes such as \cite{ABDGHRTZ17,ABDGHRTZ17a}.  Roughly speaking our approach consists in looking for low weight codewords in the LRPC code instead of looking for low weight codewords in the usual suspect, that is the dual of the LRPC code, that has in this case low weight codewords by definition of the LRPC code. This approach 
	does not seem to carry over to the LRPC codes considered in those submissions. The point is that our approach was 
	successful for RankSign because of the way the parameters of the LRPC code had to be chosen. In particular the length $n$,
	the dimension $k$ and the weight of the LRPC code have to satisfy
	$$
	n = (n-k) d.
	$$
	It is precisely this equality that is responsible for the weight $2$ codewords in the LRPC code. If $d$ is not too small
	(say $>3$) and $(n-k)d$ is sufficiently above  $n$, then the whole approach considered here fails at the very beginning. 
	\newline

	\noindent {\bf Attack on the IBE \cite{GHPT17a}.} 
	The attack on RankSign also breaks the IBE proposal of \cite{GHPT17a} since it is based partly on the RankSign primitive. We have shown here that the problem is actually deeper
	than this by showing that even if a secure signature scheme replaces in the IBE, RankSign, then an attack that breaks directly the RSL problem which is the other 
	problem on which the IBE is based, can be mounted for the parameters proposed in \cite{GHPT17a}. Again, as in the case of RankSign, the reason why this attack was successful comes from 
	the fact that the constraints on the parameters that are necessary for the scheme to work properly work in favor of ensuring that a certain code that can be computed from 
	the public data has low weight codewords. These low codewords are then found by an algebraic attack. However, contrarily to the RankSign case, where the conditions on the parameters
	force a certain code to have codewords of low weight, this phenomenon can be avoided by a very careful choice of the parameters in the IBE. This opens the way for repairing the scheme
	of \cite{GHPT17a} if a secure signature scheme is found for the rank metric.
	
	We have also studied whether the \cite{GHPT17a} approach for obtaining an IBE scheme based on coding assumptions could work in the Hamming metric. However in this case, and contrarily to what happens in the rank metric, we have given a devastating polynomial attack in the Hamming metric relying on using the simplest 
	generic decoding algorithm \cite{P62} that can not be avoided by any reasonable choice of parameters. 
	It seems that following the GPV \cite{GPV08}/\cite{GHPT17a} approach for obtaining an IBE scheme is a dead end in the case of the Hamming metric.

	To conclude this discussion on \cite{GHPT17a},
	we would like to stress that our result in the Hamming case does not imply the impossibility of designing an IBE based on coding theory. It only suggests to investigate other paradigms rather 
	than trying to adapt the GPV strategy. For instance, the recent progress of \cite{DG17,DG17a,DGHM18} made on the design of IBE's,  particularly with the concept of 
	one-time signatures with encryption, might be applied to cryptography based on  decoding assumptions.

	\newcommand{\etalchar}[1]{$^{#1}$}

\end{document}